\date{}
\title{\bf{Computational Study of New Record-Based Transmuted Chen Distribution and Its Applications to the Failure Time and Iron Sheet Data}}
\author{Caner Tan{\i}\c{s}}\affil{Department of Statistics, \c{C}ank{\i}r{\i} Karatekin University, \c{C}ank{\i}r{\i}, Turkey; canertanis@karatekin.edu.tr}
\newtheorem{theorem}{Theorem}
\begin{document}
\maketitle
\begin{abstract}
This study is considered to introduce a novel distribution as an alternative to Chen distribution via the record-based transmutation method. This technique is based on the distributions of first two upper record values. Thus, we suggest a new special case based on Chen distribution in the family of record-based transmuted distributions. We explore various distributional properties of the proposed model namely, quantile function, hazard function, median, moments, and stochastic ordering. Our distribution has three parameters and to estimate these parameters, we utilize nine different and well-known estimators. Then, we compare the performances of these estimators via a comprehensive simulation study. Also, we provide two real-world data examples to assess the fits the data sets the suggested model and its some competitors.
\end{abstract}
\textbf{Keywords: Record-based transmuted distributions, Upper record values, Chen distribution, Point estimation, Iron sheet data }

\section{Introduction}
The modeling of lifetime and reliability data is a foundation  in various scientific and engineering disciplines, necessitating flexible statistical distributions able of modeling various hazard rate behaviors. Classic models, such as the exponential and Weibull distributions, often fall short in capturing complex hazard functions, including non-monotonic structures like bathtub or upside-down bathtub shapes. To address these limitations, researchers have developed generalized and extended distributions that offer greater adaptability.

The researchers suggest new methods to generate the flexible lifetime distribution. One of these techniques is record-based transmutation introduced by \cite{balakrishnan2021record}. The record-based transmutation map (RBTM) is based on distributions of the first two upper record values. The RBTM is summarized as follows. \newline
Let $X_1 $ and $X_2 $  be a random sample with two sizes from the distribution with cumulative distribution function (CDF) $G(.)$ and the probability density function (PDF) $g(.)$, and $X_{U\left( 1 \right)} $ and $X_{U\left( 2 \right)} $ 
be upper records associated with the sample.
 
 Let us define a random variable $Y$ 
\[
\begin{array}{l}
 Y\overset{d}{=} \mbox{ }X_{U\left( 1 \right)} ,\mbox{ 
with probability }p _1, \\ 
 Y\overset{d}{=} \mbox{ }X_{U\left( 2 \right)} ,\mbox{ 
with probability }p _2, \\ 
 \end{array}
\]
where $U_{\left( n\right) }=\min \left\{ i:i>U\left( n-1\right)
,X_{i}>X_{U\left( n-1\right) }\right\} \left\{ U_{\left( n\right) }\right\}
_{n=1}^{\infty }$ refer to upper record times and $\left\{ X_{U\left(
n\right) }\right\} _{n=1}^{\infty }$ denotes the corresponding record sequence \cite{balakrishnan2021record},\citet{arnold2008first}, $p_{1}+p_{2}=1$ and $Y$ refers to a random variable having  record-based transmuted distribution. In this regard, the CDF of $Y$ is 
\begin{eqnarray}
F_{Y}\left( x\right)  &=&p_{1}P\left( {X_{U\left( 1\right) }\leq x}\right)
+p_{2}P\left( {X_{U\left( 2\right) }\leq x}\right)   \nonumber \\
&=&G\left( x\right) +p\left[ {\left( {1-G\left( x\right) }\right) \log
\left( {1-G\left( x\right) }\right) }\right], 
\end{eqnarray}
where $p\in \left( {0,1} \right)$. The corresponding PDF is  
 
\begin{equation}
\label{eq9}
f_Y \left( x \right)=g\left( x \right)\left[ {1+p\left( {-\log \left( 
{1-G\left( x \right)} \right)-1} \right)} \right].
\end{equation}

The RBTM has been applied to various baseline distributions, namely Weibull \cite{tanics2022record}, power Lomax \cite{sakthivel2022record}, Lindley \cite{tanics2024new}, generalized linear exponential \cite{arshad2024record}, unit Omega \cite{pathak2024record}, Burr X \cite{alrweili2025statistical}.

The main purpose of this paper is to introduce a novel submodel of the record-based transmuted family based on Chen distribution to provide higher flexibility in modeling lifetime data, providing a wider range of hazard rate behaviors. 

The rest of this study is designed as follows. Section 2 introduces a new submodel of record-based transmuted family of distribution related to Chen distribution. In Section 3, some basic statistical properties are explored such as density shapes, quantile function, hazard function, median, moments, and stochastic ordering. Then, we use nine methods to estimate the parameters of the proposed model in Section 4. To compare these estimators, we consider a Monte Carlo simulation study in Section 5. Section 6 provides two real-world data examples to compare the fits of the suggested model and its competitors. Finally, the conclusions are given in Section 7.

\section{Record-based transmuted Chen distribution}
Chen \cite{chen2000new} suggested Chen distribution with cumulative distribution function (CDF) and probability density function (PDF) are 
\begin{align}
\label{CDF CHEN}
G(x)=1 - \exp\left(\omega \left(1 - \exp(x^\kappa)\right)\right),
\end{align}
and
\begin{align}
\label{PDF CHEN}
g(x)=\omega \kappa x^{\kappa - 1} \exp \left( \omega \left( 1 - \exp \left( x^\kappa \right) \right) + x^\kappa \right); x>0,~\omega,~\kappa>0.
\end{align}

By using a transformation defined by \cite{balakrishnan2021record} on the CDF \eqref{CDF CHEN}, we obtain the record-based transmuted Chen (RBTC) distribution, with the CDF and PDF given as follows:
\begin{align}
\label{CDF RBTC}
F(x; \omega, \kappa,p)=1 - \exp\left(\omega \left(1 - \exp(x^\kappa)\right)\right) + p \omega \left(1 - \exp(x^\kappa)\right) \exp\left(\omega \left(1 - \exp(x^\kappa)\right)\right)
\end{align}

and
\begin{align}
\label{PDF RBTC}
f(x; \omega, \kappa,p)=\omega \kappa x^{\kappa-1} \exp\left(\omega \left(1 - \exp(x^\kappa)\right) + x^\kappa\right) \left(1 - p\omega \left(1 - \exp(x^\kappa)\right) - p\right).
\end{align}
where $~x>0,~\omega,~\kappa,~0<p<1$.
Figure \ref{pdf} illustrates the possible shapes of PDFs for RBTC distribution. From Figure \ref{pdf} it is clear that the density can be shaped as increasing, decreasing and unimodal for the chosen parameters.

\begin{figure} [H]
    \centering
    \includegraphics[width=0.8\linewidth]{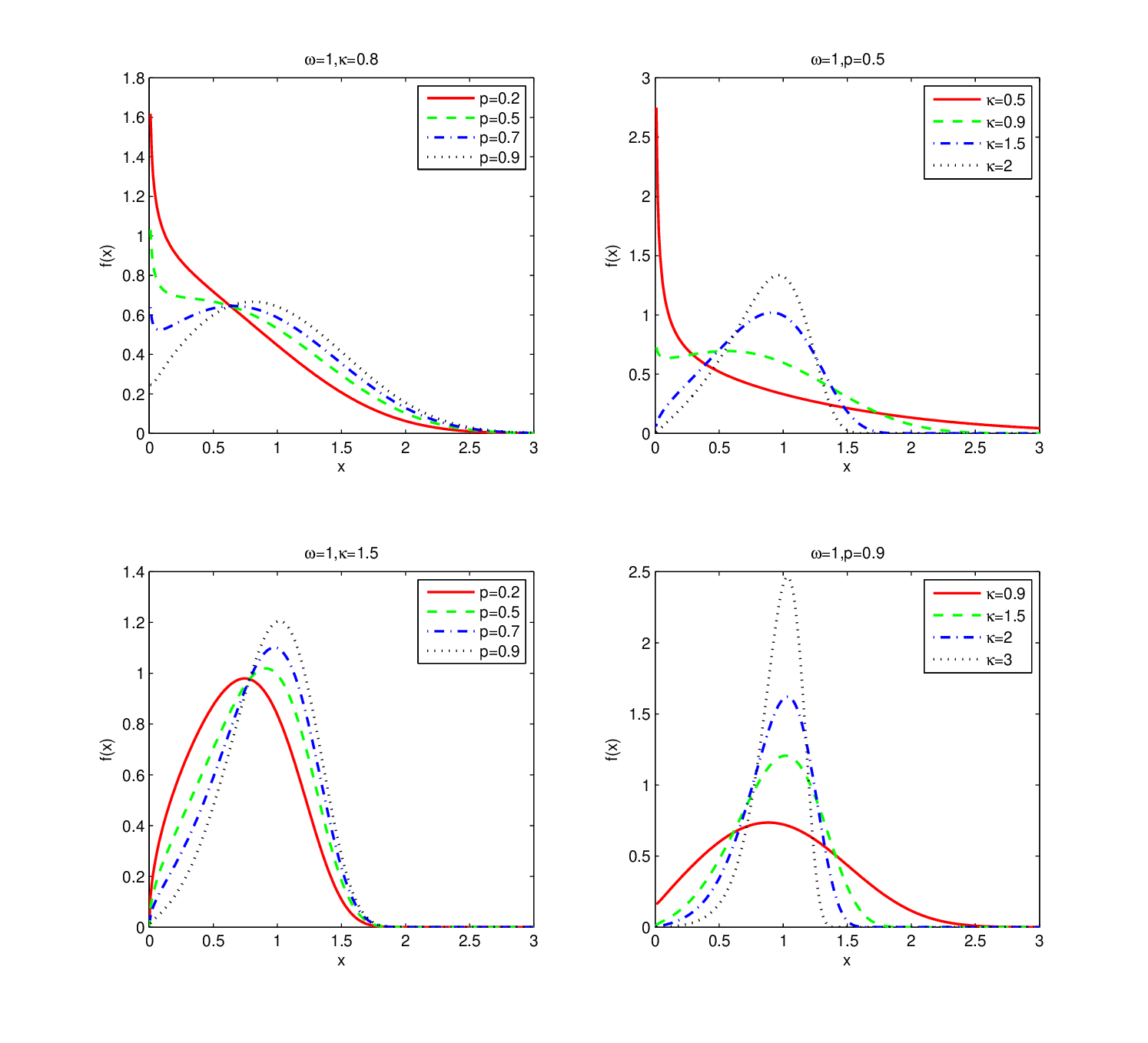}
    \caption{The PDFs of RBTC distribution for the selected parameter values}
\label{pdf}
\end{figure}

\section{Some properties of the RBTC distribution}
\label{properties}
In this section, we examine some distributional properties of the RBTC distribution, namely, quantile function, hazard function (hf), median, moments, and stochastic ordering.

\subsection{Quantile function}
 The quantile function of RBTC distribution is given as follows:
\begin{align}
\label{QF RBTC}
 Q(x) = \exp\left( \frac{ \log \left( \log \left( -\frac{1 - p \omega + p \, \mathrm{LambertW}\left( \frac{-1+u}{p e^{1/p}} \right)}{\omega p} \right) \right) }{ \kappa } \right)
\end{align}
where, $0<u<1$ and \( W(\cdot) \) denotes the Lambert W function is defined as the inverse relation of the function \( w \mapsto w e^w \), that is: $W(z) \cdot e^{W(z)} = z$.

 \subsection{Hazard function}
The hf of the RBTC distribution is 
\begin{equation}
h(x)=\omega \kappa x^{\kappa-1} \frac{ e^{x^\kappa} \left( (p\omega + p - 1) - p\omega e^{x^\kappa} \right) }{p\omega(1-e^{x^\kappa})-1 }
\end{equation}

Figure \ref{hf} illustrates the possible shapes of HFs for RBTC distribution. From Figure \ref{hf} it is clear that the hf can be shaped as increasing, decreasing and bathtub curve for the selected parameters.

\begin{figure} [H]
    \centering
    \includegraphics[width=0.8\linewidth]{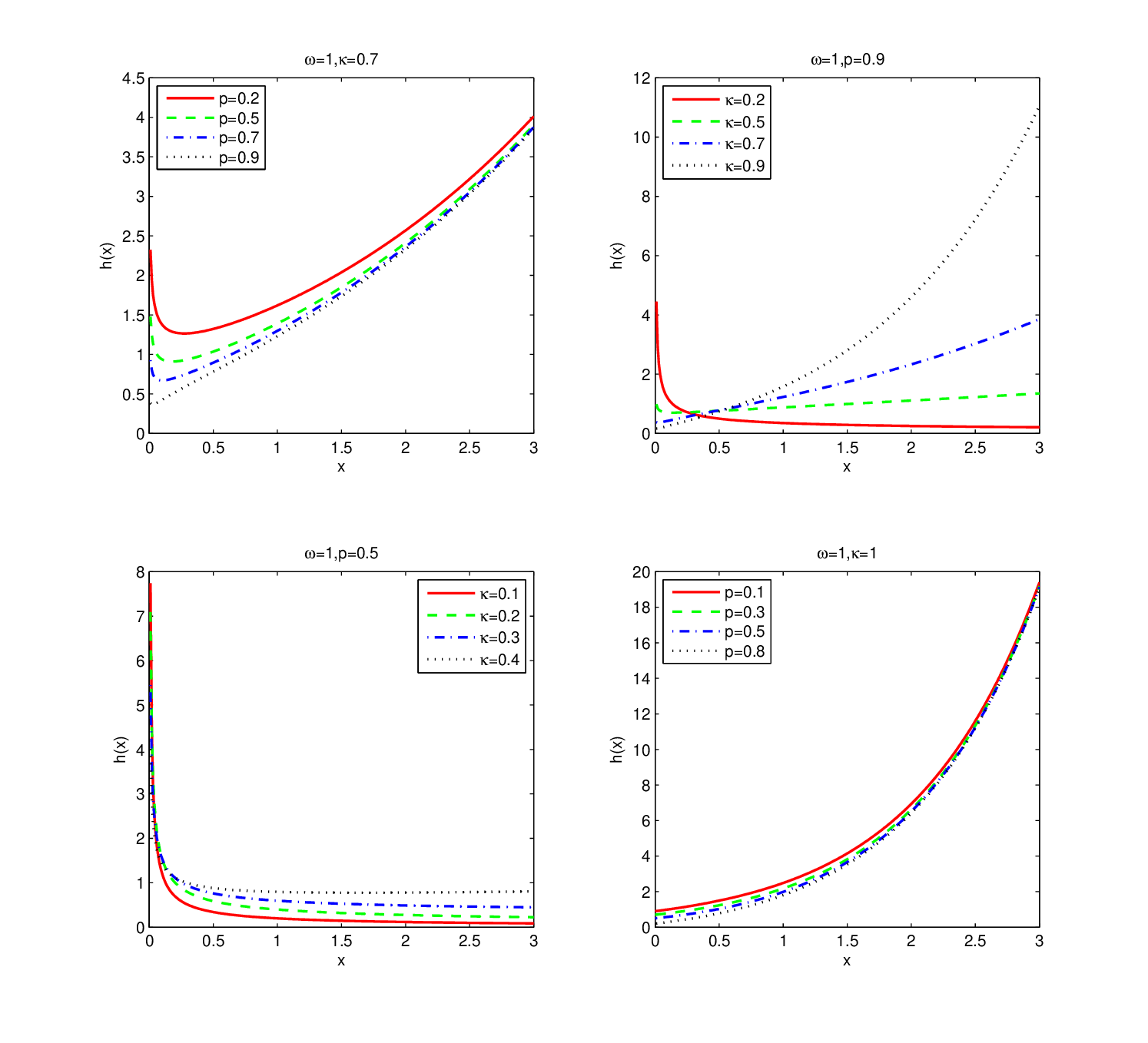}
    \caption{The HFs of RBTC distribution for the selected parameter values}
\label{hf}
\end{figure}

\subsection{Median}
The median of the RBTC distribution can be obtained by substituting 0.5 for $u$ in Eq. (\ref{QF RBTC}). Thus, the corresponding median is
\begin{align}
\label{Med RBTEP}
\mathrm{Median}(X) = \exp\left( \frac{ \log \left( \log \left( -\frac{1 - p \omega + p \, \mathrm{LambertW}\left( \frac{-1/2}{p e^{1/p}} \right)}{\omega p} \right) \right) }{ \kappa } \right).
\end{align}


\subsection{Moments}
In this  subsection, we introduce the $r$-th central moment of $RBTC(\omega,\kappa,p)$ distribution via the following theorem.
\begin{theorem}
    
Let $X$ be a random variable from the $RBTC(\omega,\kappa,p)$ distribution. The $r$-th central moment is

\begin{equation}
\label{moment}
\mathbb{E}[X^r] = e^{\omega} \sum_{i=0}^\infty \binom{r-\kappa}{i} 
(-\log \omega)^{r-\kappa - i} 
\left[ (1 - p \omega - p) J(i, 0, \omega) + p \, J(i, 1, \omega) \right],
\end{equation}
where \[
J(i, m, \omega) := \int_\omega^\infty (\log t)^i e^{-t} t^m \, dt, \quad m \in \{0,1\}.
\]
\end{theorem}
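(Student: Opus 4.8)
The plan is to compute $\mathbb{E}[X^r]=\int_0^\infty x^r f(x;\omega,\kappa,p)\,dx$ directly from the PDF in \eqref{PDF RBTC}, the whole difficulty being to turn the doubly exponential factor $\exp\bigl(\omega(1-e^{x^\kappa})\bigr)$ into something integrable in closed form. The substitution engineered for this is $t=\omega\exp(x^\kappa)$, under which $x^\kappa=\log(t/\omega)$, so that $x^r=\bigl(\log t-\log\omega\bigr)^{r/\kappa}$, while $e^{x^\kappa}=t/\omega$ and $\kappa x^{\kappa-1}\,dx=dt/t$; the limits become $t=\omega$ (at $x=0$) and $t\to\infty$. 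First I would push every factor of the integrand through this change of variable and check that the Jacobian piece $\omega\kappa x^{\kappa-1}\exp\bigl(\omega(1-e^{x^\kappa})+x^\kappa\bigr)\,dx$ collapses to $e^{\omega}e^{-t}\,dt$, while the bracket $1-p\omega(1-e^{x^\kappa})-p$ becomes the affine polynomial $(1-p\omega-p)+pt$.

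After this reduction the moment takes the form
\begin{equation*}
\mathbb{E}[X^r]=e^{\omega}\int_\omega^\infty\bigl(\log t-\log\omega\bigr)^{r/\kappa}e^{-t}\bigl[(1-p\omega-p)+pt\bigr]\,dt,
\end{equation*}
which I would split into an $m=0$ piece (from the constant $1-p\omega-p$) and an $m=1$ piece (from $pt$), matching the two occurrences of $t^m$ in the definition of $J(i,m,\omega)$. I would then expand $\bigl(\log t-\log\omega\bigr)^{r/\kappa}$ by the generalized (Newton) binomial theorem as $\sum_{i\ge 0}\binom{r/\kappa}{i}(-\log\omega)^{r/\kappa-i}(\log t)^i$ and interchange the sum with each integral; the surviving integrals are precisely $J(i,m,\omega)=\int_\omega^\infty(\log t)^i e^{-t}t^m\,dt$, and reassembling the two pieces yields the stated series.

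The main obstacle is analytic, not algebraic: the binomial series for $\bigl(\log t-\log\omega\bigr)^{r/\kappa}$ has radius of convergence governed by $|\log t|<|\log\omega|$, which fails on much of the integration range $t\in(\omega,\infty)$, so the term-by-term integration cannot be justified naively and the identity is best read as a formal (or asymptotic) expansion. I would therefore devote the bulk of the argument to legitimizing the interchange --- e.g.\ via dominated convergence on the sub-range where the series converges together with a separate estimate of the tail, or by recognizing each coefficient integral as a derivative of the upper incomplete gamma function, $J(i,m,\omega)=\partial_s^{\,i}\Gamma(s,\omega)\big|_{s=m+1}$, which guarantees that every $J(i,m,\omega)$ is finite even when the full series converges only formally.
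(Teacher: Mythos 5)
Your proof follows essentially the same route as the paper's. The paper substitutes $u=e^{x^\kappa}$ and then $t=\omega u$; you compress these into the single substitution $t=\omega e^{x^\kappa}$. After that, the collapse of the Jacobian factor to $e^{\omega}e^{-t}\,dt$, the splitting of the affine factor $(1-p\omega-p)+pt$ into its $t^0$ and $t^1$ pieces, the generalized binomial expansion about $-\log\omega$, and the identification of the surviving integrals as $J(i,m,\omega)$ are exactly the steps of the paper's proof. There is, however, a discrepancy you glossed over: your computation produces the exponent $r/\kappa$ (correctly, since $x=(\log(t/\omega))^{1/\kappa}$ gives $x^r=(\log(t/\omega))^{r/\kappa}$), whereas the theorem as stated, and the paper's own proof, use $r-\kappa$ throughout. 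These differ in general, so the series you actually derive, namely $e^{\omega}\sum_{i\geq 0}\binom{r/\kappa}{i}(-\log\omega)^{r/\kappa-i}\left[(1-p\omega-p)J(i,0,\omega)+p\,J(i,1,\omega)\right]$, is not literally the display in Eq.~(\ref{moment}); your closing claim to have recovered ``the stated series'' is therefore inaccurate as written. What you have in fact done is silently correct an error in the statement (the paper's $r-\kappa$ is evidently a slip for $r/\kappa$), and a careful write-up should say so explicitly rather than assert agreement.

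Your analytic caveat is also well taken and is the one point of substance on which you go beyond the paper: for non-integer exponent, the binomial series for $(\log t-\log\omega)^{r/\kappa}$ converges only where $|\log t|<|\log\omega|$, which fails for all large $t$ in the integration range, so the term-by-term integration is unjustified and the identity holds only as a formal expansion --- the paper never acknowledges this. Be aware, though, that your proposed repairs do not yet close the gap: finiteness of each $J(i,m,\omega)=\partial_s^i\Gamma(s,\omega)|_{s=m+1}$ guarantees that every term of the series is well defined, but says nothing about the interchange of sum and integral; and a dominated-convergence argument on the subregion of convergence plus a ``tail estimate'' cannot work as described, because on the tail the series diverges pointwise, so there is nothing to estimate. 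A rigorous statement would either present Eq.~(\ref{moment}) as a formal/asymptotic expansion, or restrict to the case $r/\kappa\in\mathbb{N}$, where the binomial sum is finite and the identity is exact.
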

\begin{proof}

To compute the integral in Eq. (\ref{moment}), we substitute $exp(x^\kappa)=u$ and rewrite the integral as follows.

The \(r\)-th moment $\mathbb{E}[X^r]$ is defined as
\[
\mathbb{E}[X^r] = (1 - p \omega - p) \, \omega e^{\omega} I_1 + p \, \omega^{2} e^{\omega} I_2,
\]
where
\[
I_1 = \int_1^\infty (\log u)^{r - \kappa} e^{-\omega u} \, du, \quad 
I_2 = \int_1^\infty u (\log u)^{r - \kappa} e^{-\omega u} \, du.
\]

Then, we expand the logarithmic term using the generalized binomial theorem:
\[
(\log u)^{r - \kappa} = \bigl(-\log \omega + \log(\omega u)\bigr)^{r - \kappa} = 
\sum_{i=0}^\infty \binom{r-\kappa}{i} (-\log \omega)^{r-\kappa - i} \left(\log(\omega u)\right)^i,
\]

Substitute the expansion into \(I_1\) and \(I_2\), and apply the transform of variable \(t = \omega u\). Thus, we derive
\[
I_1 = \frac{1}{\omega} \sum_{i=0}^\infty \binom{r-\kappa}{i} 
(-\log \omega)^{r-\kappa - i} \int_\omega^\infty (\log t)^i e^{-t} t^{0} \, dt,
\]
\[
I_2 = \frac{1}{\omega^{2}} \sum_{i=0}^\infty \binom{r-\kappa}{i} 
(-\log \omega)^{r-\kappa - i} \int_\omega^\infty (\log t)^i e^{-t} t^{1} \, dt.
\]

Finally, we get
\[
\mathbb{E}[X^r]=e^{\omega} \sum_{i=0}^\infty \binom{r-\kappa}{i} 
(-\log \omega)^{r-\kappa - i} 
\left[ (1 - p \omega - p) J(i, 0, \omega) + p \, J(i, 1, \omega) \right].
\]
Thus, the proof is completed.
\end{proof}

\subsection{Stochastic Ordering}
In this subsection, we explore the stochastic ordering for the RBTC distribution. Therefore, we define the following theorem.
\begin{theorem}
\label{lrordering} Let $X\sim RBTC\left( \omega ,\kappa ,p_{1}\right) $ and
$Y\sim RBTC\left( \omega ,\kappa ,p_{2}\right) .$ If $p_{1}<p_{2}$ then $X$
is less than $Y$ in the likelihood ratio order, that is, the ratio function
of the corresponding PDFs decreases in $x$.
\end{theorem}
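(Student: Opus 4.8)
The plan is to argue straight from the definition of the likelihood ratio order: $X\leq_{\mathrm{lr}}Y$ holds exactly when the ratio $f_X(x)/f_Y(x)$ is non-increasing in $x$ on the common support $(0,\infty)$, which is the phrasing of the statement. Using the PDF in Eq.~\eqref{PDF RBTC}, I would form this ratio and note that the baseline factor $\omega\kappa x^{\kappa-1}\exp\!\left(\omega\left(1-\exp(x^\kappa)\right)+x^\kappa\right)$ is identical for $X$ and $Y$, since both share the same $\omega$ and $\kappa$; it therefore cancels, leaving
\[
\frac{f_X(x)}{f_Y(x)}=\frac{1-p_1-p_1\omega\left(1-e^{x^\kappa}\right)}{1-p_2-p_2\omega\left(1-e^{x^\kappa}\right)},
\]
a quotient of two affine functions of the single quantity $w(x):=1-e^{x^\kappa}$.

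Next I would analyze the monotonicity of this quotient by differentiating in $x$ with the quotient and chain rules, factoring out $w'(x)=-\kappa x^{\kappa-1}e^{x^\kappa}$. Upon expanding, the numerator of the derivative collapses: the two terms carrying $p_1 p_2\omega w$ cancel, and what survives is proportional to
\[
\omega\,w'(x)\,(p_2-p_1).
\]
Because $w'(x)<0$ for $x>0$, $\omega>0$, and $p_2-p_1>0$ by hypothesis, this numerator is strictly negative. (Equivalently, one may substitute $v=w(x)$, observe that $v$ is strictly decreasing in $x$, and check that the quotient is increasing in $v$ — its derivative in $v$ has numerator $\omega(p_2-p_1)>0$ — so composing with the decreasing map $x\mapsto v$ yields the same conclusion.)

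To finish, the denominator must be well-defined and of constant sign. The only point needing care is verifying that the factor $1-p_i-p_i\omega\left(1-e^{x^\kappa}\right)$ is strictly positive on $(0,\infty)$, since this is precisely what ensures $f_Y(x)>0$ and that the squared denominator of the derivative never vanishes. This holds because $1-e^{x^\kappa}<0$ for $x>0$, so $-p_i\omega\left(1-e^{x^\kappa}\right)=p_i\omega\left(e^{x^\kappa}-1\right)\geq 0$, while $1-p_i>0$ as $0<p_i<1$; hence the factor is bounded below by $1-p_i>0$. With the denominator positive and the derivative's numerator negative, $f_X(x)/f_Y(x)$ is strictly decreasing in $x$, which is exactly the asserted likelihood ratio ordering $X\leq_{\mathrm{lr}}Y$.
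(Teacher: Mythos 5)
Your proof is correct and takes essentially the same route as the paper: cancel the common factor $\omega\kappa x^{\kappa-1}\exp\left(\omega\left(1-e^{x^\kappa}\right)+x^\kappa\right)$, reduce to a ratio of two functions affine in $e^{x^\kappa}$, and differentiate to show it decreases --- the paper differentiates $\log$ of the ratio rather than the ratio itself, but the sign check rests on the same cancellation that leaves a term proportional to $\omega\left(p_2-p_1\right)$. Your explicit verification that the denominators stay positive (bounded below by $1-p_i>0$ since $e^{x^\kappa}>1$) is a detail the paper leaves implicit, and it is genuinely needed for the argument to be airtight.
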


\begin{proof}
The ratio of densities for any $x>0$ is as follows:
\begin{equation*}
g\left( x\right) =.\frac{1 - p_1 \omega + p_1 \omega e^{x^{\kappa}} - p_1}{1 - p_2 \omega + p_2 \omega e^{x^{\kappa}} - p_2}
\end{equation*}

Then, consider the derivative of $\log \left( g\left( x\right) \right) $ in $x$%
\begin{equation*}
\frac{d\log \left( g\left( x\right) \right) }{dx}=\kappa x^{\kappa - 1} e^{x^\kappa} \left( \frac{p_1 \omega}{1 - p_1 (1 + \omega) + p_1 \omega e^{x^\kappa}} - \frac{p_2 \omega}{1 - p_2 (1 + \omega) + p_2 \omega e^{x^\kappa}} \right)<0
\end{equation*}%
for $p_{1}<p_{2}$, and thus the proof is completed.
\end{proof}

We notice that  it follows that $X$ is also less than $Y$ in the hazard ratio, the mean residual life, and the stochastic orders under the conditions
given in Theorem \ref{lrordering} from \cite{shaked1994stochastic}.

\section{Parameter Estimation}\label{Estimation methods}
In this section, we discuss point estimation for the RBTC distribution. To estimate the parameters of RBTC distribution, we utilize nine different methods. 

\subsection{Method of maximum likelihood}
In this subsection, we propose ML estimators (MLEs) of the parameters $\omega$, $\kappa$, and $p$ of the RBTC distribution. 

Let ${X_1},{X_2}...{X_n}$ be a random sample from the RBTC and ${x_1},{x_2}...{x_n}$  denote the observed values of the sample. Then, the corresponding log-likelihood function is given by

\begin{align}
\label{mle}
\ell(\omega, \kappa, p) = \sum_{i=1}^n \left[
\log \omega + \log \kappa + (\kappa - 1) \log x_i + \omega \left(1 - e^{x_i^\kappa} \right) + x_i^\kappa + \log\left(1 - p\omega \left(1 - e^{x_i^\kappa} \right) - p\right)
\right]
\end{align}

We obtain the MLEs of the parameters $\omega, \kappa$ and $p$ of the RBTC by maximizing Eq. (\ref{mle}). This optimization problem is solved via numerical methods such as Newton-Raphson and Nelder Mead. 
\subsection{Method of least squares}
This subsection provides the LS estimators (LSEs) the $\omega, \kappa$ and $p$ parameters of the RBTC distribution. This estimator is proposed by \cite{swain1988least} as an alternative to the MLE. The LSEs can be derived by minimizing the function given in Eq. (\ref{lse}).

\begin{align}
\label{lse}
LS(x_{i})&=\sum_{i=1}^{n}\left[F(x_{i:n})-\frac{i}{n+1}\right]^2,
\end{align}
where ${x_{i:n}}$ for $i = 1,2...n$ refer to the order statistics.
\subsection{Method of weighted least squares}
This subsection introduces the WLS estimators (WLSEs) of $\omega, \kappa$ and $p$ parameters for the RBTC via the method proposed by \cite{swain1988least}. We derive the WLSEs by minimizing the Eq. (\ref{wlse}).
\begin{align}
\label{wlse}
WLS(x_{i})&=\sum_{i=1}^{n}\frac{(n+1)^2(n+2)}{i(n-i+1)}\left[F(x_{i:n})-\frac{i}{n+1}\right]^2.
\end{align} 

\subsection{Method of Anderson-Darling}
In this subsection, we discuss the AD estimators (ADEs) of $\omega, \kappa$ and $p$ parameters. This method is related to the AD goodness-of-fit statistic proposed by \cite{anderson1952asymptotic}. The ADEs ${\hat \omega _{ADE}},{\hat \kappa _{ADE}}$, ${\hat p _{ADE}}$ of the parameters $\phi, \kappa$ and $p$ can be obtained by minimizing Eq (\ref{ade}).

\begin{align}
\label{ade}
AD(x_{i})&=-n-\frac{1}{n}\sum_{i=1}^{n}(2i-1)\left[\log F(x_{i:n})+\log S(x_{n-i-1:n})\right].
\end{align} 


\subsection{Method of Cram\'{e}r-von Mises}
In this section, we provide the CvM estimators (CvMEs) of parameters of RBTC distribution. The Cram\'{e}r-von Mises method depends on minimizing the difference between the CDF and the empirical distribution functions, as proposed by \cite{choi1968estimation}. The CvMEs are computed by maximizing the function in Eq. (\ref{cvme}).

\begin{align}
\label{cvme}
C(x_{i})&=\frac{1}{12n}+\sum_{i=1}^{n}\left[F(x_{i:n})-\frac{2i-1}{2n}\right]^2.
\end{align}
\subsection{Method of maximum product of spacings}
In this subsection, we deal with the point estimation for the RBTC distribution via the MPS method. The MPS method introduced by \cite{cheng1983estimating} and \cite{ranneby1984maximum} as an alternative to the ML method, the MPS method based on maximizing the function in Eq. (\ref{mpse}) to derive the MPS estimators (MPSEs). 
\begin{align}
\label{mpse}
\delta\left(x_{i} \right) =\frac{1}{n+1}\sum_{i=1}^{n+1}\log I_{i}(x_{i}),
\end{align}

where $I_{i}(x_{i})=F(x_{i:n})-F(x_{i-1:n})$, $F(x_{0:n})=0$ and $F(x_{n+1:n})=1.$

\subsection{Right tail Anderson Darling estimation method}
This subsection proposes the RTAD estimators (RTADEs) of $\omega$, $\kappa$, and $p$ parameters. The RTADEs are obtained by minimizing the Eq. (\ref{RTADE}).
\begin{equation} \label{RTADE}
    L\left( {\omega,\kappa,p} \right) = \frac{n}{2} - 2\sum\limits_{i = 1}^n {F\left( {{x_{i:n}}} \right) - \frac{1}{n}} \sum\limits_{i = 1}^n {\left( {2i - 1} \right)\log \left( {1 - F\left( {{x_{n - i + 1}}} \right)} \right)} 
\end{equation}

\subsection{Minimum spacing absolute distance estimation method}
In this subsection, it is proposed the MSAD estimators (MSADEs) of the parameters $\omega$, $\kappa$, and $p$. The MSADEs of $\omega$, $\kappa$, and $p$ parameters are obtained by minimizing the function given in Eq. (\ref{msad}).
\begin{equation} \label{msad}
    \Lambda \left( {\omega,\kappa,p} \right|{x_i}) = \sum\limits_{i = 1}^{n + 1} {\left| {{D _i} - \frac{1}{{n + 1}}} \right|} .
\end{equation}
\subsection{Minimum spacing absolute-log distance estimation method}
In this subsection, we discuss the MSALD estimators (MSALDEs) of $\omega$, $\kappa$, and $p$ parameters. \cite{torabi2008general} suggested the MSALD method as an alternative the ML. We obtain the MSALDEs of the $\phi$, $\kappa$, and $p$ parameters by minimizing the Eq. (\ref{MSALD})
\begin{equation}\label{MSALD}
    \Psi \left( {\omega,\kappa,p|{x_i}} \right) = \sum\limits_{i = 1}^{n + 1} {\left| {{D_i} - \log \frac{1}{{n + 1}}} \right|}. 
\end{equation}

\section{Simulation} \label{simulation}
In this section, we provide an extensive Monte-Carlo (MC) Simulation study to evaluate the performance of the mentioned estimators in Section  \ref{Estimation methods}. We design the parameter settings and other issues in the MC simulations as follows:
In all MC simulations, the sample sizes, $n=25,50,100,200,500,1000$ with 5000 repetitions and the initial values of the $\omega$, $\kappa$ and $p$ parameters are considered as follows: \newline
$Case_{I} = \left( {\omega  = 2,\kappa  = 1,p = 0.5} \right)$,\\
$Case_{II} = \left( {\omega  = 0.9,\kappa  = 0.9,p = 0.9} \right)$,\\
$Case_{III} = \left( {\omega  = 1.5,\kappa  = 0.5,p = 0.3} \right)$,\\
$Case_{IV} = \left( {\omega  = 2.2,\kappa  = 0.7,p = 0.2} \right)$,\\

We compare the performances of the mentioned estimators via the bias, mean squared error (MSE), and mean relative error (MRE) values. The corresponding formulas of these measures are 

$$Bias=\frac{1}{5000}\sum\limits_{i=1}^{5000}\left( \hat{\Phi}-\Phi
\right),$$

$$MSE=\frac{1}{5000}\sum\limits_{i=1}^{5000}\left( \hat{\Phi%
}-\Phi \right) ^{2},$$

$$MRE=\frac{1}{5000}\sum_{i=1}^{5000}\frac{|\hat{\Phi}-\Phi|}{\Phi},$$
where $\Phi=\left(\omega,\kappa,p\right)$. 
\subsection{Random sample generation}
In this subsection, we explore the generation of random samples from the $RBTC(\omega, \kappa, p)$ distribution. In this regard, we define an acceptance-rejection (AR) sampling algorithm.
We prefer the Weibull distribution as the proposal distribution in the AR algorithm since it is a popular distribution. The AR algorithm is presented as follows:

\textbf{Algorithm 1.}

\textbf{A1.} Generate data on random variable $Y \sim Weibull(\vartheta,\varpi)$ 
with the PDF $g$ given as follows: 
\begin{equation*}
g\left( \vartheta,\varpi \right)=\vartheta \varpi x^{\varpi -1}e^{-\vartheta
x^\varpi}.
\end{equation*}

\textbf{A2.} Generate $U$ from standard uniform distribution(independent of $%
Y$).

\textbf{A3.} If%
\begin{equation*}
U<\frac{f\left( Y;\omega,\kappa,p\right) }{k\times g\left(
Y;\vartheta,\varpi \right) }
\end{equation*}%
then set $X=Y$ (\textquotedblleft accept\textquotedblright ); otherwise go
back to A1 (\textquotedblleft reject\textquotedblright ), where the PDF $f$ $%
\left( .\right) $ is given as in Eq. (\ref{PDF RBTC})  and 
\begin{equation*}
k=\underset{z\in 
\mathbb{R}
_{+}}{\max }\frac{f\left( z;\omega,\kappa,p\right) }{g\left(
z;\vartheta,\varpi \right) }.
\end{equation*}%

We generate random data on $X$ from the $RBTC(\omega, \kappa, p)$ via AR algorithm. We use Algorithm 1 in all MC simulations.

The results of MC simulations are given in Tables \ref{sim:tab1}-\ref{sim:tab4}.

\begin{table}[H]
\centering
\caption{The biases, MSEs and MREs for $\omega=2$, $\kappa=1$ and $p=0.5$}
\scalebox{0.9} {\ \label{sim:tab1}
\begin{tabular}{ccccccccccc}\hline
          &      &         & Bias    &         &        & MSE    &        &        & MRE    &        \\\hline
Estimator & $n$    & $\hat\omega$   & $\hat\kappa$   & $\hat p$  & $\hat\omega$  & $\hat\kappa$  & $\hat p$ & $\hat\omega$  & $\hat\kappa$  & $\hat p$ \\\hline

	MLE       & 25   & 0.01076  & 0.13366  & -0.15703 & 0.21178 & 0.07558 & 0.09668 & 0.18534 & 0.21125 & 0.52461 \\
          & 50   & -0.06968 & 0.07679  & -0.13849 & 0.14140 & 0.03769 & 0.09564 & 0.15394 & 0.15485 & 0.52292 \\
          & 100  & -0.06788 & 0.03499  & -0.08245 & 0.12446 & 0.02361 & 0.09041 & 0.14416 & 0.12471 & 0.50545 \\
          & 200  & -0.04724 & 0.01240  & -0.04522 & 0.08796 & 0.01697 & 0.07899 & 0.12054 & 0.10699 & 0.46448 \\
          & 500  & -0.05257 & 0.00819  & -0.04252 & 0.07038 & 0.01010 & 0.06073 & 0.10332 & 0.08336 & 0.39718 \\
          & 1000 & -0.04354 & 0.00633  & -0.03325 & 0.04870 & 0.00700 & 0.04211 & 0.08435 & 0.06905 & 0.32169 \\\hline
LSE       & 25   & -0.02841 & -0.04684 & 0.02119  & 0.17550 & 0.06082 & 0.12776 & 0.16788 & 0.19900 & 0.50768 \\
          & 50   & -0.06814 & -0.02918 & -0.02282 & 0.13530 & 0.03856 & 0.09813 & 0.14875 & 0.15641 & 0.48225 \\
          & 100  & -0.06909 & -0.01408 & -0.02778 & 0.10717 & 0.02700 & 0.09196 & 0.13338 & 0.12820 & 0.47484 \\
          & 200  & -0.03138 & -0.02714 & 0.01040  & 0.08161 & 0.01874 & 0.07883 & 0.11622 & 0.10605 & 0.44076 \\
          & 500  & -0.04708 & -0.01711 & -0.00937 & 0.07140 & 0.01148 & 0.06411 & 0.10482 & 0.08414 & 0.39831 \\
          & 1000 & -0.02195 & -0.01915 & 0.00933  & 0.04478 & 0.00936 & 0.04389 & 0.08168 & 0.07538 & 0.32750 \\\hline
WLSE      & 25   & -0.01232 & -0.05475 & 0.01925  & 0.21988 & 0.05702 & 0.19188 & 0.18340 & 0.19311 & 0.57686 \\
          & 50   & -0.10043 & 0.00002  & -0.08017 & 0.16272 & 0.03334 & 0.10716 & 0.16278 & 0.14529 & 0.50765 \\
          & 100  & -0.07206 & -0.00037 & -0.04840 & 0.11704 & 0.02346 & 0.08841 & 0.13946 & 0.12116 & 0.47401 \\
          & 200  & -0.04200 & -0.01254 & -0.01563 & 0.08314 & 0.01580 & 0.07607 & 0.11577 & 0.10072 & 0.43834 \\
          & 500  & -0.05333 & -0.00498 & -0.02841 & 0.07007 & 0.00990 & 0.06061 & 0.10158 & 0.08022 & 0.38658 \\
          & 1000 & -0.03205 & -0.00698 & -0.01174 & 0.04533 & 0.00743 & 0.04078 & 0.08114 & 0.06810 & 0.31201 \\\hline
ADE       & 25   & 0.04686  & -0.01604 & 0.02448  & 0.18287 & 0.05540 & 0.11489 & 0.17142 & 0.18644 & 0.51918 \\
          & 50   & -0.03589 & -0.00846 & -0.03144 & 0.12583 & 0.03217 & 0.09635 & 0.14525 & 0.14463 & 0.49941 \\
          & 100  & -0.02882 & -0.01833 & -0.00555 & 0.12124 & 0.02155 & 0.08721 & 0.14289 & 0.11750 & 0.47782 \\
          & 200  & -0.01211 & -0.02559 & 0.01584  & 0.07788 & 0.01665 & 0.07601 & 0.11340 & 0.10312 & 0.44246 \\
          & 500  & -0.04345 & -0.00941 & -0.01825 & 0.07088 & 0.01000 & 0.06043 & 0.10265 & 0.08030 & 0.38756 \\
          & 1000 & -0.01551 & -0.01417 & 0.00556  & 0.04109 & 0.00781 & 0.03897 & 0.07860 & 0.06964 & 0.30863 \\\hline
CvME      & 25   & 0.04933  & 0.05713  & -0.05014 & 0.22959 & 0.07806 & 0.13565 & 0.18862 & 0.21425 & 0.54419 \\
          & 50   & -0.05837 & 0.03121  & -0.08276 & 0.14717 & 0.04115 & 0.10482 & 0.15548 & 0.15879 & 0.50843 \\
          & 100  & -0.06872 & 0.01717  & -0.06055 & 0.11339 & 0.02844 & 0.09765 & 0.13778 & 0.13106 & 0.49605 \\
          & 200  & -0.04514 & -0.00581 & -0.02044 & 0.08836 & 0.01843 & 0.08422 & 0.12189 & 0.10631 & 0.45845 \\
          & 500  & -0.04600 & -0.01152 & -0.01444 & 0.07214 & 0.01158 & 0.06541 & 0.10599 & 0.08493 & 0.40376 \\
          & 1000 & -0.02579 & -0.01437 & 0.00219  & 0.04669 & 0.00940 & 0.04534 & 0.08347 & 0.07583 & 0.33332 \\\hline
MPSE      & 25   & -0.02870 & -0.09641 & 0.05519  & 0.14251 & 0.05805 & 0.10715 & 0.14784 & 0.19692 & 0.50954 \\
          & 50   & -0.01265 & -0.08764 & 0.06059  & 0.10521 & 0.03700 & 0.09048 & 0.12964 & 0.15991 & 0.50240 \\
          & 100  & -0.00050 & -0.07652 & 0.06703  & 0.10174 & 0.02619 & 0.08797 & 0.12480 & 0.13793 & 0.50075 \\
          & 200  & 0.02576  & -0.06634 & 0.07594  & 0.07406 & 0.01976 & 0.07602 & 0.10827 & 0.11918 & 0.46772 \\
          & 500  & 0.01336  & -0.04284 & 0.04668  & 0.05967 & 0.01163 & 0.05731 & 0.09519 & 0.09060 & 0.39565 \\
          & 1000 & 0.01802  & -0.03225 & 0.04105  & 0.03828 & 0.00763 & 0.03758 & 0.07623 & 0.07262 & 0.31469 \\\hline
TADE      & 25   & 0.09375  & 0.03309  & 0.02844  & 0.20998 & 0.07052 & 0.16209 & 0.18625 & 0.20992 & 0.62016 \\
          & 50   & 0.07561  & -0.01152 & 0.06042  & 0.16424 & 0.04214 & 0.11303 & 0.16371 & 0.16392 & 0.54516 \\
          & 100  & 0.03034  & -0.01783 & 0.04672  & 0.10457 & 0.02875 & 0.10163 & 0.13504 & 0.13803 & 0.53100 \\
          & 200  & 0.03730  & -0.02517 & 0.05345  & 0.07928 & 0.01882 & 0.07805 & 0.11731 & 0.11191 & 0.47140 \\
          & 500  & -0.00457 & -0.00940 & 0.01037  & 0.05744 & 0.01142 & 0.05364 & 0.09845 & 0.08965 & 0.39419 \\
          & 1000 & -0.01362 & -0.00472 & 0.00022  & 0.04186 & 0.00834 & 0.03890 & 0.08414 & 0.07766 & 0.33357 \\\hline
MSADE     & 25   & 0.03354  & -0.05738 & 0.04603  & 0.10800 & 0.05697 & 0.06688 & 0.11971 & 0.19128 & 0.35642 \\
          & 50   & 0.01837  & -0.05732 & 0.04935  & 0.07629 & 0.03770 & 0.05831 & 0.10392 & 0.15483 & 0.36639 \\
          & 100  & 0.02751  & -0.04433 & 0.05611  & 0.07049 & 0.02643 & 0.05439 & 0.10059 & 0.13230 & 0.36247 \\
          & 200  & 0.03879  & -0.04549 & 0.05649  & 0.05247 & 0.01800 & 0.04932 & 0.08755 & 0.10866 & 0.34526 \\
          & 500  & 0.02251  & -0.02983 & 0.03823  & 0.04109 & 0.01042 & 0.04124 & 0.07720 & 0.08290 & 0.31698 \\
          & 1000 & 0.01809  & -0.02684 & 0.03471  & 0.03181 & 0.00735 & 0.03116 & 0.06685 & 0.06913 & 0.26704 \\\hline
MSALDE    & 25   & 0.05785  & -0.09512 & 0.08038  & 0.15664 & 0.06202 & 0.09480 & 0.14986 & 0.20337 & 0.47624 \\
          & 50   & 0.02833  & -0.08656 & 0.07437  & 0.11593 & 0.04019 & 0.09002 & 0.13438 & 0.16523 & 0.49021 \\
          & 100  & 0.01821  & -0.07030 & 0.07674  & 0.09618 & 0.02912 & 0.08496 & 0.12282 & 0.14280 & 0.49541 \\
          & 200  & 0.04374  & -0.06706 & 0.08330  & 0.08118 & 0.02158 & 0.07797 & 0.11372 & 0.12295 & 0.46725 \\
          & 500  & 0.02398  & -0.04462 & 0.05489  & 0.05827 & 0.01281 & 0.05736 & 0.09517 & 0.09498 & 0.39811 \\
          & 1000 & 0.01581  & -0.03311 & 0.04052  & 0.04532 & 0.00871 & 0.04223 & 0.08290 & 0.07722 & 0.33396

\\\hline
\end{tabular}
}
\end{table}

\begin{table}[H]
\centering
\caption{The biases, MSEs and MREs for $\omega=0.9$, $\kappa=0.9$ and $p=0.9$}
\scalebox{0.9} {\ \label{sim:tab2}
\begin{tabular}{ccccccccccc}\hline
          &      &         & Bias    &         &        & MSE    &        &        & MRE    &        \\\hline
Estimator & $n$    & $\hat\omega$   & $\hat\kappa$   & $\hat p$  & $\hat\omega$  & $\hat\kappa$  & $\hat p$ & $\hat\omega$  & $\hat\kappa$  & $\hat p$ \\\hline

MLE       & 25   & -0.21750 & 0.17698  & -0.41008 & 0.09731 & 0.06498 & 0.24896 & 0.29668 & 0.22601 & 0.46250 \\
          & 50   & -0.16074 & 0.11728  & -0.30654 & 0.07565 & 0.03911 & 0.19261 & 0.24345 & 0.16934 & 0.35905 \\
          & 100  & -0.09787 & 0.06275  & -0.19457 & 0.05076 & 0.02137 & 0.12759 & 0.17814 & 0.11681 & 0.24769 \\
          & 200  & -0.03054 & 0.01497  & -0.08323 & 0.02243 & 0.00982 & 0.05455 & 0.10802 & 0.07401 & 0.13806 \\
          & 500  & 0.00958  & -0.01018 & -0.01512 & 0.00578 & 0.00267 & 0.01075 & 0.05889 & 0.04100 & 0.06487 \\
          & 1000 & 0.02321  & -0.02031 & 0.00684  & 0.00231 & 0.00126 & 0.00224 & 0.04396 & 0.03292 & 0.04187 \\\hline
LSE       & 25   & -0.10365 & 0.04571  & -0.22808 & 0.07620 & 0.04401 & 0.14502 & 0.26240 & 0.17829 & 0.35878 \\
          & 50   & -0.06317 & 0.02878  & -0.15618 & 0.06075 & 0.02902 & 0.12495 & 0.22574 & 0.15326 & 0.30755 \\
          & 100  & -0.03086 & 0.01007  & -0.09220 & 0.04453 & 0.02072 & 0.09028 & 0.18385 & 0.12963 & 0.23826 \\
          & 200  & 0.01444  & -0.02164 & -0.01838 & 0.02408 & 0.01263 & 0.04736 & 0.13080 & 0.10029 & 0.16337 \\
          & 500  & 0.04268  & -0.04071 & 0.03017  & 0.00879 & 0.00559 & 0.01331 & 0.08370 & 0.06813 & 0.09653 \\
          & 1000 & 0.05425  & -0.05103 & 0.04852  & 0.00573 & 0.00406 & 0.00637 & 0.07091 & 0.06125 & 0.07389 \\\hline
WLSE      & 25   & -0.14381 & 0.06912  & -0.29470 & 0.06992 & 0.03594 & 0.16862 & 0.24872 & 0.15694 & 0.37656 \\
          & 50   & -0.09463 & 0.05041  & -0.20540 & 0.05628 & 0.02555 & 0.12890 & 0.21186 & 0.14129 & 0.29959 \\
          & 100  & -0.05128 & 0.02463  & -0.12261 & 0.03840 & 0.01646 & 0.08126 & 0.16263 & 0.10930 & 0.20952 \\
          & 200  & -0.00533 & -0.00584 & -0.04644 & 0.01838 & 0.00875 & 0.03768 & 0.10903 & 0.07914 & 0.13210 \\
          & 500  & 0.02323  & -0.02417 & 0.00221  & 0.00626 & 0.00355 & 0.01050 & 0.06576 & 0.05111 & 0.07131 \\
          & 1000 & 0.03540  & -0.03382 & 0.02165  & 0.00320 & 0.00209 & 0.00287 & 0.05214 & 0.04316 & 0.04906 \\\hline
ADE       & 25   & -0.13010 & 0.08242  & -0.27739 & 0.06573 & 0.03427 & 0.14845 & 0.23772 & 0.15788 & 0.32664 \\
          & 50   & -0.08873 & 0.05656  & -0.19520 & 0.04905 & 0.02298 & 0.10557 & 0.19299 & 0.13127 & 0.25501 \\
          & 100  & -0.05080 & 0.02781  & -0.12072 & 0.03412 & 0.01443 & 0.07240 & 0.15022 & 0.10090 & 0.18721 \\
          & 200  & -0.00451 & -0.00501 & -0.04435 & 0.01603 & 0.00753 & 0.03188 & 0.10170 & 0.07400 & 0.11885 \\
          & 500  & 0.02331  & -0.02397 & 0.00255  & 0.00574 & 0.00316 & 0.00914 & 0.06352 & 0.04947 & 0.06747 \\
          & 1000 & 0.03558  & -0.03405 & 0.02198  & 0.00315 & 0.00210 & 0.00276 & 0.05196 & 0.04329 & 0.04847 \\\hline
CvME      & 25   & -0.09431 & 0.09501  & -0.21296 & 0.08811 & 0.06228 & 0.16292 & 0.28234 & 0.21352 & 0.37831 \\
          & 50   & -0.05521 & 0.04882  & -0.14354 & 0.06538 & 0.03448 & 0.13331 & 0.23458 & 0.16621 & 0.31610 \\
          & 100  & -0.02600 & 0.01868  & -0.08441 & 0.04672 & 0.02246 & 0.09488 & 0.18898 & 0.13369 & 0.24416 \\
          & 200  & 0.01792  & -0.01832 & -0.01282 & 0.02489 & 0.01294 & 0.04888 & 0.13303 & 0.10049 & 0.16614 \\
          & 500  & 0.04440  & -0.03968 & 0.03296  & 0.00896 & 0.00553 & 0.01354 & 0.08469 & 0.06760 & 0.09773 \\
          & 1000 & 0.05513  & -0.05055 & 0.04995  & 0.00582 & 0.00402 & 0.00651 & 0.07160 & 0.06082 & 0.07480 \\\hline
MPSE      & 25   & -0.13672 & 0.02510  & -0.30714 & 0.05798 & 0.02366 & 0.15304 & 0.21642 & 0.13242 & 0.34405 \\
          & 50   & -0.10938 & 0.02888  & -0.23586 & 0.04551 & 0.01725 & 0.11574 & 0.17916 & 0.11330 & 0.26468 \\
          & 100  & -0.05499 & 0.00642  & -0.12941 & 0.02469 & 0.01004 & 0.05566 & 0.12098 & 0.08356 & 0.15513 \\
          & 200  & -0.01445 & -0.01196 & -0.06054 & 0.01203 & 0.00586 & 0.02624 & 0.08146 & 0.06243 & 0.09444 \\
          & 500  & 0.01270  & -0.02028 & -0.01270 & 0.00406 & 0.00228 & 0.00635 & 0.05202 & 0.04115 & 0.05299 \\
          & 1000 & 0.02392  & -0.02534 & 0.00629  & 0.00209 & 0.00138 & 0.00180 & 0.04193 & 0.03505 & 0.03772 \\\hline
TADE      & 25   & -0.09824 & 0.08698  & -0.20773 & 0.10148 & 0.05166 & 0.20805 & 0.30636 & 0.19744 & 0.42826 \\
          & 50   & -0.06710 & 0.05024  & -0.15923 & 0.08397 & 0.03545 & 0.17878 & 0.27326 & 0.17236 & 0.37880 \\
          & 100  & -0.04246 & 0.02376  & -0.10965 & 0.05757 & 0.02201 & 0.12570 & 0.21083 & 0.13050 & 0.28687 \\
          & 200  & -0.01357 & -0.00088 & -0.06294 & 0.03572 & 0.01335 & 0.08324 & 0.15735 & 0.09963 & 0.21564 \\
          & 500  & 0.01271  & -0.01811 & -0.01551 & 0.01373 & 0.00550 & 0.03015 & 0.09313 & 0.06312 & 0.12223 \\
          & 1000 & 0.02482  & -0.02774 & 0.00521  & 0.00590 & 0.00275 & 0.01017 & 0.06533 & 0.04686 & 0.07736 \\\hline
MSADE     & 25   & -0.06584 & 0.00675  & -0.20867 & 0.04599 & 0.03168 & 0.09393 & 0.18876 & 0.14666 & 0.28071 \\
          & 50   & -0.05771 & 0.01223  & -0.15289 & 0.03323 & 0.01863 & 0.06434 & 0.15509 & 0.11796 & 0.21355 \\
          & 100  & -0.03443 & -0.00012 & -0.09774 & 0.02181 & 0.01110 & 0.03997 & 0.12245 & 0.09075 & 0.15270 \\
          & 200  & -0.00648 & -0.01342 & -0.04954 & 0.01216 & 0.00640 & 0.01954 & 0.09153 & 0.06937 & 0.10027 \\
          & 500  & 0.00831  & -0.01670 & -0.01673 & 0.00587 & 0.00293 & 0.00926 & 0.06218 & 0.04651 & 0.06772 \\
          & 1000 & 0.02015  & -0.02156 & 0.00133  & 0.00313 & 0.00180 & 0.00340 & 0.04929 & 0.03845 & 0.04880 \\\hline
MSALDE    & 25   & -0.09898 & 0.01345  & -0.27582 & 0.05344 & 0.02905 & 0.11787 & 0.20571 & 0.14778 & 0.31302 \\
          & 50   & -0.09028 & 0.02386  & -0.21414 & 0.04181 & 0.02075 & 0.09464 & 0.17944 & 0.12621 & 0.24534 \\
          & 100  & -0.04887 & 0.00562  & -0.12165 & 0.02543 & 0.01158 & 0.04914 & 0.13142 & 0.09255 & 0.15310 \\
          & 200  & -0.01545 & -0.01092 & -0.06394 & 0.01447 & 0.00638 & 0.02753 & 0.09486 & 0.06865 & 0.10665 \\
          & 500  & 0.00763  & -0.01700 & -0.02000 & 0.00650 & 0.00312 & 0.01187 & 0.06382 & 0.04741 & 0.06735 \\
          & 1000 & 0.02230  & -0.02329 & 0.00406  & 0.00283 & 0.00167 & 0.00282 & 0.04834 & 0.03724 & 0.04604

\\\hline
\end{tabular}
}
\end{table}

\begin{table}[H]
\centering
\caption{The biases, MSEs and MREs for $\omega=1.5$, $\kappa=0.5$ and $p=0.3$}
\scalebox{0.9} {\ \label{sim:tab3}
\begin{tabular}{ccccccccccc}\hline
          &      &         & Bias    &         &        & MSE    &        &        & MRE    &        \\\hline
Estimator & $n$    & $\hat\omega$   & $\hat\kappa$   & $\hat p$  & $\hat\omega$  & $\hat\kappa$  & $\hat p$ & $\hat\omega$  & $\hat\kappa$  & $\hat p$ \\\hline

MLE       & 25   & 0.03545  & 0.03842  & -0.04275 & 0.11999 & 0.01149 & 0.06413 & 0.18618 & 0.16032 & 0.72978 \\
          & 50   & -0.00175 & 0.02009  & -0.03916 & 0.09550 & 0.00616 & 0.06511 & 0.16707 & 0.12576 & 0.74031 \\
          & 100  & 0.00264  & 0.00883  & -0.01525 & 0.08151 & 0.00358 & 0.06512 & 0.16007 & 0.09647 & 0.72696 \\
          & 200  & -0.03227 & 0.00418  & -0.03467 & 0.07412 & 0.00233 & 0.06417 & 0.15698 & 0.07783 & 0.73385 \\
          & 500  & -0.06103 & 0.00604  & -0.05687 & 0.06443 & 0.00139 & 0.05532 & 0.14690 & 0.06082 & 0.67854 \\
          & 1000 & -0.05619 & 0.00360  & -0.04815 & 0.05253 & 0.00107 & 0.04613 & 0.12807 & 0.05258 & 0.60008 \\\hline
LSE       & 25   & -0.00362 & -0.01998 & 0.02357  & 0.10408 & 0.01435 & 0.09230 & 0.17747 & 0.18818 & 0.87061 \\
          & 50   & -0.02715 & -0.01158 & -0.00993 & 0.09163 & 0.00745 & 0.07134 & 0.16285 & 0.13800 & 0.80272 \\
          & 100  & -0.02051 & -0.00797 & -0.00640 & 0.07936 & 0.00421 & 0.06410 & 0.15850 & 0.10226 & 0.76017 \\
          & 200  & -0.04497 & -0.00535 & -0.03108 & 0.06986 & 0.00239 & 0.06004 & 0.15174 & 0.07797 & 0.73121 \\
          & 500  & -0.04893 & -0.00209 & -0.03598 & 0.06277 & 0.00157 & 0.05620 & 0.14674 & 0.06465 & 0.70726 \\
          & 1000 & -0.04718 & -0.00418 & -0.03000 & 0.05435 & 0.00117 & 0.04975 & 0.13384 & 0.05670 & 0.65225 \\\hline
WLSE      & 25   & 0.00870  & -0.02170 & 0.00962  & 0.13933 & 0.01206 & 0.18823 & 0.19638 & 0.17640 & 1.02838 \\
          & 50   & 0.05864  & -0.02518 & 0.07668  & 0.10309 & 0.00773 & 0.10977 & 0.17193 & 0.13876 & 0.95067 \\
          & 100  & 0.03517  & -0.01522 & 0.04671  & 0.08416 & 0.00425 & 0.08280 & 0.16345 & 0.10193 & 0.83338 \\
          & 200  & -0.02704 & -0.00718 & -0.01576 & 0.08108 & 0.00270 & 0.07433 & 0.16280 & 0.08209 & 0.79643 \\
          & 500  & -0.03080 & -0.00299 & -0.02135 & 0.06356 & 0.00151 & 0.05854 & 0.14557 & 0.06299 & 0.69991 \\
          & 1000 & -0.05299 & -0.00109 & -0.03981 & 0.05479 & 0.00109 & 0.04920 & 0.13248 & 0.05368 & 0.63282 \\\hline
ADE       & 25   & 0.02042  & -0.00530 & 0.01257  & 0.10826 & 0.00970 & 0.09023 & 0.17949 & 0.15384 & 0.87949 \\
          & 50   & -0.00817 & -0.00368 & -0.00893 & 0.09206 & 0.00645 & 0.07678 & 0.16498 & 0.12791 & 0.83265 \\
          & 100  & 0.00899  & -0.00846 & 0.01359  & 0.08566 & 0.00339 & 0.07512 & 0.16443 & 0.09386 & 0.80626 \\
          & 200  & -0.03799 & -0.00298 & -0.02996 & 0.07235 & 0.00229 & 0.06377 & 0.15689 & 0.07795 & 0.75530 \\
          & 500  & -0.04270 & -0.00068 & -0.03388 & 0.06331 & 0.00151 & 0.05703 & 0.14626 & 0.06389 & 0.69815 \\
          & 1000 & -0.05447 & -0.00036 & -0.04178 & 0.05288 & 0.00107 & 0.04743 & 0.13120 & 0.05369 & 0.62755 \\\hline
CvME      & 25   & 0.03738  & 0.02584  & -0.03237 & 0.12824 & 0.01735 & 0.08456 & 0.19257 & 0.19424 & 0.83166 \\
          & 50   & -0.01165 & 0.01182  & -0.04117 & 0.09591 & 0.00793 & 0.06606 & 0.16472 & 0.13846 & 0.77330 \\
          & 100  & -0.02416 & 0.00546  & -0.03261 & 0.07817 & 0.00416 & 0.06044 & 0.15666 & 0.09993 & 0.73360 \\
          & 200  & -0.04414 & 0.00058  & -0.04065 & 0.06900 & 0.00240 & 0.05954 & 0.15095 & 0.07731 & 0.72649 \\
          & 500  & -0.05103 & 0.00081  & -0.04240 & 0.06209 & 0.00155 & 0.05545 & 0.14593 & 0.06420 & 0.70165 \\
          & 1000 & -0.05093 & -0.00226 & -0.03598 & 0.05419 & 0.00114 & 0.04919 & 0.13358 & 0.05605 & 0.64735 \\\hline
MPSE      & 25   & 0.03067  & -0.05701 & 0.11171  & 0.10759 & 0.01186 & 0.12873 & 0.18013 & 0.18092 & 1.00597 \\
          & 50   & 0.06301  & -0.04829 & 0.11805  & 0.11392 & 0.00823 & 0.11592 & 0.18829 & 0.14923 & 1.00229 \\
          & 100  & 0.07905  & -0.03851 & 0.11573  & 0.09958 & 0.00566 & 0.10634 & 0.18203 & 0.12277 & 0.95393 \\
          & 200  & 0.05568  & -0.03126 & 0.08594  & 0.09675 & 0.00420 & 0.10038 & 0.18380 & 0.10321 & 0.93211 \\
          & 500  & 0.01679  & -0.01692 & 0.03584  & 0.08146 & 0.00235 & 0.07738 & 0.16860 & 0.07662 & 0.81765 \\
          & 1000 & -0.00187 & -0.01072 & 0.01362  & 0.06142 & 0.00156 & 0.05745 & 0.14069 & 0.06125 & 0.68222 \\\hline
TADE      & 25   & 0.05590  & 0.00225  & 0.04523  & 0.11454 & 0.01274 & 0.12337 & 0.18323 & 0.17470 & 0.87421 \\
          & 50   & 0.01532  & -0.00054 & 0.01288  & 0.08418 & 0.00639 & 0.08894 & 0.15916 & 0.12675 & 0.79213 \\
          & 100  & 0.00878  & -0.00150 & 0.00976  & 0.07540 & 0.00369 & 0.07422 & 0.15327 & 0.09612 & 0.73495 \\
          & 200  & -0.00618 & -0.00596 & 0.00239  & 0.06782 & 0.00235 & 0.06929 & 0.14672 & 0.07377 & 0.71230 \\
          & 500  & -0.02009 & -0.00216 & -0.01281 & 0.05742 & 0.00149 & 0.05504 & 0.13552 & 0.05978 & 0.64943 \\
          & 1000 & -0.03473 & -0.00214 & -0.02300 & 0.05038 & 0.00107 & 0.04761 & 0.12573 & 0.05148 & 0.60539 \\\hline
MSADE     & 25   & 0.00511  & -0.02472 & 0.03987  & 0.04370 & 0.01418 & 0.03935 & 0.10199 & 0.19036 & 0.49838 \\
          & 50   & 0.02592  & -0.02484 & 0.05503  & 0.04081 & 0.00721 & 0.03821 & 0.09603 & 0.13627 & 0.48650 \\
          & 100  & 0.02296  & -0.01535 & 0.04144  & 0.03401 & 0.00444 & 0.03322 & 0.08651 & 0.10475 & 0.44144 \\
          & 200  & 0.02663  & -0.01381 & 0.03966  & 0.03547 & 0.00263 & 0.03660 & 0.08725 & 0.08010 & 0.44359 \\
          & 500  & 0.01173  & -0.00507 & 0.01727  & 0.03153 & 0.00144 & 0.03138 & 0.08122 & 0.05777 & 0.40389 \\
          & 1000 & -0.01123 & -0.00334 & -0.00231 & 0.03069 & 0.00092 & 0.02819 & 0.08229 & 0.04628 & 0.39777 \\\hline
MSALDE    & 25   & 0.06197  & -0.04976 & 0.11742  & 0.11033 & 0.01311 & 0.11115 & 0.17945 & 0.18764 & 0.92318 \\
          & 50   & 0.08084  & -0.04342 & 0.11908  & 0.11715 & 0.00861 & 0.11078 & 0.18938 & 0.15102 & 0.95058 \\
          & 100  & 0.08734  & -0.03429 & 0.11412  & 0.10219 & 0.00598 & 0.10152 & 0.17915 & 0.12596 & 0.90542 \\
          & 200  & 0.05693  & -0.02829 & 0.08141  & 0.09872 & 0.00424 & 0.09884 & 0.18094 & 0.10215 & 0.90265 \\
          & 500  & 0.02802  & -0.01806 & 0.04789  & 0.08450 & 0.00269 & 0.08256 & 0.16852 & 0.08132 & 0.82907 \\
          & 1000 & -0.00296 & -0.01020 & 0.01266  & 0.06338 & 0.00165 & 0.05921 & 0.14338 & 0.06305 & 0.69000

\\\hline
\end{tabular}
}
\end{table}

\begin{table}[H]
\centering
\caption{The biases, MSEs and MREs for $\omega=2.2$, $\kappa=0.7$ and $p=0.2$}
\scalebox{0.9} {\ \label{sim:tab4}
\begin{tabular}{ccccccccccc}\hline
          &      &         & Bias    &         &        & MSE    &        &        & MRE    &        \\\hline
Estimator & $n$    & $\hat\omega$   & $\hat\kappa$   & $\hat p$  & $\hat\omega$  & $\hat\kappa$  & $\hat p$ & $\hat\omega$  & $\hat\kappa$  & $\hat p$ \\\hline

MLE       & 25   & 0.02997  & 0.01056  & 0.04733 & 0.20167 & 0.01923 & 0.06632 & 0.16228 & 0.15140 & 1.07099 \\
          & 50   & 0.08430  & -0.00895 & 0.05819 & 0.18603 & 0.01144 & 0.06622 & 0.15884 & 0.12043 & 1.07248 \\
          & 100  & 0.08588  & -0.01625 & 0.07416 & 0.15632 & 0.00663 & 0.06854 & 0.14864 & 0.09355 & 1.10131 \\
          & 200  & 0.06529  & -0.01891 & 0.06090 & 0.13115 & 0.00463 & 0.06420 & 0.13748 & 0.07526 & 1.06309 \\
          & 500  & 0.02206  & -0.01538 & 0.03389 & 0.09512 & 0.00280 & 0.04885 & 0.12082 & 0.05683 & 0.92973 \\
          & 1000 & 0.01131  & -0.01442 & 0.02700 & 0.08341 & 0.00222 & 0.04305 & 0.11263 & 0.04759 & 0.84772 \\\hline
LSE       & 25   & -0.10655 & -0.05594 & 0.09381 & 0.20883 & 0.03325 & 0.10951 & 0.16749 & 0.21157 & 1.41023 \\
          & 50   & -0.05988 & -0.04559 & 0.05047 & 0.16904 & 0.01677 & 0.07906 & 0.15375 & 0.15128 & 1.25901 \\
          & 100  & -0.01584 & -0.03190 & 0.05289 & 0.13057 & 0.00925 & 0.07037 & 0.13674 & 0.11142 & 1.19824 \\
          & 200  & 0.02024  & -0.02882 & 0.05506 & 0.13101 & 0.00606 & 0.06641 & 0.13938 & 0.09119 & 1.16822 \\
          & 500  & 0.01946  & -0.02892 & 0.05515 & 0.10577 & 0.00381 & 0.05663 & 0.12895 & 0.07129 & 1.06400 \\
          & 1000 & -0.00166 & -0.02384 & 0.03603 & 0.09278 & 0.00295 & 0.05078 & 0.12391 & 0.05955 & 0.98810 \\\hline
WLSE      & 25   & -0.07251 & -0.05315 & 0.10380 & 0.21955 & 0.02788 & 0.11631 & 0.16928 & 0.19227 & 1.44988 \\
          & 50   & -0.03074 & -0.04997 & 0.06761 & 0.21886 & 0.01640 & 0.11628 & 0.17088 & 0.14628 & 1.44992 \\
          & 100  & 0.02850  & -0.04399 & 0.09415 & 0.16806 & 0.01043 & 0.11054 & 0.15071 & 0.11623 & 1.43209 \\
          & 200  & 0.00074  & -0.02935 & 0.04533 & 0.14803 & 0.00591 & 0.08125 & 0.14696 & 0.08941 & 1.26577 \\
          & 500  & 0.00361  & -0.02440 & 0.03896 & 0.10974 & 0.00359 & 0.05901 & 0.13103 & 0.06702 & 1.06330 \\
          & 1000 & -0.01485 & -0.01895 & 0.02026 & 0.09517 & 0.00243 & 0.04838 & 0.12293 & 0.05278 & 0.94617 \\\hline
ADE       & 25   & -0.04828 & -0.03897 & 0.09281 & 0.16974 & 0.02335 & 0.10306 & 0.15115 & 0.17342 & 1.37287 \\
          & 50   & 0.02506  & -0.04149 & 0.08685 & 0.17098 & 0.01522 & 0.08979 & 0.15423 & 0.14192 & 1.30623 \\
          & 100  & 0.02552  & -0.02940 & 0.06798 & 0.13659 & 0.00789 & 0.07392 & 0.14048 & 0.10328 & 1.22092 \\
          & 200  & 0.04763  & -0.02848 & 0.06798 & 0.13293 & 0.00551 & 0.07029 & 0.14006 & 0.08617 & 1.17599 \\
          & 500  & 0.01724  & -0.02440 & 0.04642 & 0.10228 & 0.00353 & 0.05667 & 0.12777 & 0.06695 & 1.04040 \\
          & 1000 & 0.00028  & -0.01921 & 0.02891 & 0.08708 & 0.00244 & 0.04595 & 0.11862 & 0.05296 & 0.91769 \\\hline
CvME      & 25   & -0.01369 & 0.00855  & 0.01593 & 0.24965 & 0.03461 & 0.09089 & 0.17479 & 0.20500 & 1.27844 \\
          & 50   & -0.00950 & -0.01398 & 0.01355 & 0.18519 & 0.01525 & 0.06994 & 0.15933 & 0.13900 & 1.18765 \\
          & 100  & 0.00996  & -0.01530 & 0.03363 & 0.13411 & 0.00834 & 0.06373 & 0.13749 & 0.10501 & 1.14794 \\
          & 200  & 0.03248  & -0.02065 & 0.04554 & 0.13162 & 0.00568 & 0.06415 & 0.13907 & 0.08777 & 1.14570 \\
          & 500  & 0.02223  & -0.02478 & 0.04867 & 0.10451 & 0.00350 & 0.05415 & 0.12791 & 0.06848 & 1.04101 \\
          & 1000 & -0.00072 & -0.02166 & 0.03246 & 0.09175 & 0.00281 & 0.04943 & 0.12298 & 0.05822 & 0.97389 \\\hline
MPSE      & 25   & -0.12375 & -0.11657 & 0.20110 & 0.16411 & 0.03116 & 0.15307 & 0.15162 & 0.21289 & 1.64626 \\
          & 50   & 0.01416  & -0.09629 & 0.18819 & 0.14998 & 0.02194 & 0.14007 & 0.14487 & 0.17625 & 1.59909 \\
          & 100  & 0.06379  & -0.07593 & 0.17325 & 0.16160 & 0.01336 & 0.12657 & 0.15690 & 0.13452 & 1.53763 \\
          & 200  & 0.07083  & -0.05989 & 0.13903 & 0.14932 & 0.00941 & 0.11001 & 0.15360 & 0.10882 & 1.42958 \\
          & 500  & 0.05828  & -0.04281 & 0.10108 & 0.12509 & 0.00576 & 0.08371 & 0.14358 & 0.08191 & 1.23581 \\
          & 1000 & 0.03969  & -0.03145 & 0.07121 & 0.10455 & 0.00394 & 0.06461 & 0.12934 & 0.06345 & 1.05055 \\\hline
TADE      & 25   & -0.05625 & -0.02389 & 0.07531 & 0.16039 & 0.02625 & 0.13164 & 0.14893 & 0.18134 & 1.32986 \\
          & 50   & -0.02989 & -0.01948 & 0.02450 & 0.14761 & 0.01242 & 0.07151 & 0.14596 & 0.12493 & 1.12005 \\
          & 100  & 0.00596  & -0.01373 & 0.03216 & 0.12537 & 0.00650 & 0.05643 & 0.13373 & 0.08925 & 1.03661 \\
          & 200  & -0.02586 & -0.00991 & 0.00060 & 0.10081 & 0.00396 & 0.04789 & 0.12346 & 0.07128 & 0.97627 \\
          & 500  & -0.00526 & -0.01615 & 0.02025 & 0.09437 & 0.00234 & 0.04340 & 0.12372 & 0.05468 & 0.93840 \\
          & 1000 & -0.02503 & -0.01305 & 0.00533 & 0.07822 & 0.00166 & 0.03628 & 0.11286 & 0.04548 & 0.84571 \\\hline
MSADE     & 25   & -0.08407 & -0.06121 & 0.12746 & 0.07625 & 0.03648 & 0.08094 & 0.09058 & 0.22100 & 1.08835 \\
          & 50   & -0.01539 & -0.05636 & 0.10143 & 0.06428 & 0.01849 & 0.06218 & 0.07969 & 0.15861 & 0.95779 \\
          & 100  & 0.03007  & -0.04214 & 0.09582 & 0.06139 & 0.01142 & 0.05905 & 0.07955 & 0.12180 & 0.91858 \\
          & 200  & 0.03062  & -0.02972 & 0.06332 & 0.05250 & 0.00644 & 0.04531 & 0.07155 & 0.09000 & 0.77893 \\
          & 500  & 0.03328  & -0.02170 & 0.05082 & 0.05253 & 0.00325 & 0.03299 & 0.07157 & 0.06154 & 0.65174 \\
          & 1000 & 0.03403  & -0.01784 & 0.04441 & 0.05071 & 0.00229 & 0.02990 & 0.07191 & 0.05005 & 0.61107 \\\hline
MSALDE    & 25   & -0.08659 & -0.10693 & 0.20344 & 0.16854 & 0.03492 & 0.15110 & 0.14541 & 0.22094 & 1.56349 \\
          & 50   & 0.04033  & -0.08917 & 0.18661 & 0.15495 & 0.02268 & 0.13326 & 0.14453 & 0.17745 & 1.50231 \\
          & 100  & 0.08385  & -0.07079 & 0.17507 & 0.16017 & 0.01458 & 0.12504 & 0.14971 & 0.13870 & 1.47561 \\
          & 200  & 0.07401  & -0.05513 & 0.13266 & 0.15415 & 0.00972 & 0.10843 & 0.15094 & 0.11001 & 1.37320 \\
          & 500  & 0.05240  & -0.03793 & 0.08948 & 0.12192 & 0.00547 & 0.07653 & 0.13796 & 0.07936 & 1.15536 \\
          & 1000 & 0.04050  & -0.02939 & 0.06814 & 0.09951 & 0.00386 & 0.06088 & 0.12382 & 0.06354 & 1.00230

\\\hline
\end{tabular}
}
\end{table}

From the simulation results, we observe that the bias and MSE values of the estimator for all parameters decrease as the sample sizes increase. In small sample sizes, MPSE, MSALDE and MSADE have been good competitors to MLE of $\omega$. It has been observed that the MSE values of these estimators are sometimes smaller than the MLE. For the $\kappa$ parameter, the MLE generally outperforms its competitors according to the MSE criterion. However, MLE did not outperform its competitors in estimating the p parameter as it did in estimating the $\kappa$ parameter. It can be concluded that MSADE is the best estimator overall in estimating the p parameter. In conclusion, we recommend MLE, MSADE and MSALDE for estimating the parameters of the RBTC distribution. 

\section{Real data analysis}

In the real data analysis section, we present practical data examples to demonstrate the superiority of the RBTC distribution over its competitors such as Chen (C) \cite{chen2000new}, transmuted Weibull (TW) \cite{aryal2011transmuted}, transmuted generalized Rayleigh (TGR) \cite{merovci2014transmuted}, generalized Rayleigh (GR) \cite{kundu2005generalized}, transmuted record type Weibull (TRTW) \cite{balakrishnan2021record}, transmuted exponentiated exponential (TEE) \cite{merovci2013transmuted} in modeling real-life data. For the comparison the fitted models, we consider some selection criteria as follows: Kolmogorov-Smirnov statistics (KS), Anderson-Darling statistics (AD), Cramer-von-Mises statistics (CvM), and their p-values.
\subsection{Failure time data}
In this subsection, a data example is presented to demonstrate the superiority of the RBTC distribution over its competitors in modeling real-world data. 
This censored tri-modal data studied by \cite{murthy2004weibull}, includes 30 items that are tested when testing is stopped on the 20th failure. The failure time data set is
0.0014, 0.0623, 1.3826, 2.0130, 2.5274, 2.8221, 3.1544, 4.9835, 5.5462, 5.8196, 5.8714,
7.4710, 7.5080, 7.6667, 8.6122, 9.0442, 9.1153, 9.6477, 10.1547, 10.7582. Table \ref{tab:real1} shows the selection criteria as well as the MLEs and the corresponding standard errors (SEs) of the parameters for the fitted models. Figure \ref{fig:real1cdf} illustrates the fitted CDFs while Figure \ref{fig:real1pdf} shows the fitted PDFs for failure time data.
\begin{table}[H]
\centering
\caption{The comparison statistics, MLEs and SEs of the parameters for the failure time data}%
\label{tab:real1}
\scalebox{0.65} {\ 
\begin{tabular}{cccccccccccccc}
\hline
Model & $-2\log \ell$ & KS     & AD     & CvM    & p-value(KS) & p-value(AD) & p-value(CvM) & $\hat{\omega}$    & $\hat{\kappa}$  & $\hat{p}$       & SE($\hat{\omega}$) & SE($\hat{\kappa}$) & SE($\hat{p}$)  \\\hline
RBTC                      & 95.8601  & 0.1477 & 0.6191 & 0.0642 & 0.7217      & 0.6280      & 0.7933       & 0.0837 & 0.5628 & 0.4424  & 0.0681  & 0.0817    & 0.4299 \\
C                         & 96.2592  & 0.1520 & 0.6493 & 0.0681 & 0.6897      & 0.6005      & 0.7688       & 0.0480 & 0.5887 & -       & 0.0260  & 0.0652    & -      \\
TW                        & 107.5636 & 0.1931 & 1.3186 & 0.1653 & 0.3947      & 0.2262      & 0.3483       & 0.9822 & 4.4411 & -0.5941 & 0.2119  & 1.1028    & 0.3028 \\
TGR                       & 100.3834 & 0.1953 & 1.1070 & 0.1792 & 0.3810      & 0.3050      & 0.3134       & 0.3132 & 0.1209 & -0.6711 & 0.0970  & 0.0215    & 0.2777 \\
GR                        & 103.3645 & 0.2379 & 1.6268 & 0.2901 & 0.1764      & 0.1493      & 0.1439       & 0.3945 & 0.1101 & -       & 0.1001  & 0.0216    & -      \\
TRTW                      & 106.8758 & 0.1922 & 1.2945 & 0.1635 & 0.4000      & 0.2339      & 0.3530       & 0.9667 & 0.3204 & 0.7132  & 0.2094  & 0.1580    & 0.2177 \\
W                         & 109.5036 & 0.2205 & 1.5974 & 0.2271 & 0.2465      & 0.1551      & 0.2212       & 1.0893 & 5.8164 & -       & 0.2210  & 1.2216    & -      \\
TEE                       & 106.1027 & 0.2173 & 1.4041 & 0.2263 & 0.2611      & 0.2011      & 0.2224       & 0.6908 & 0.1881 & -0.6894 & 0.2251  & 0.0511    & 0.2606    \\\hline 
\end{tabular}
}
\end{table}

\begin{figure}[H]
\centerline{\includegraphics[width=4.96in,height=2.82in]{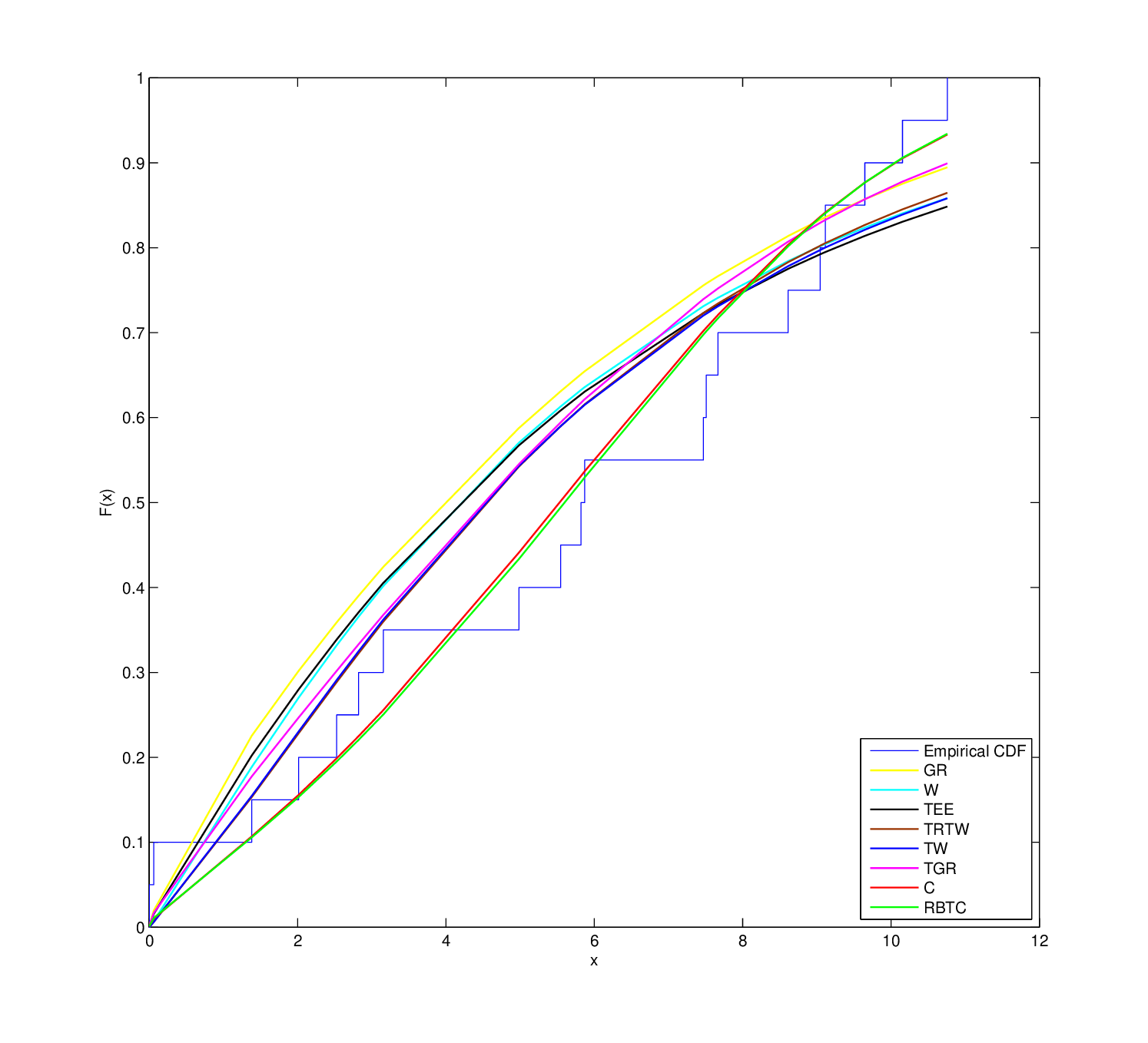}}
\caption{The fitted CDFs for the failure time data set}
\label{fig:real1cdf}
\end{figure}

\begin{figure}[H]
\centerline{\includegraphics[width=4.89in,height=2.78in]{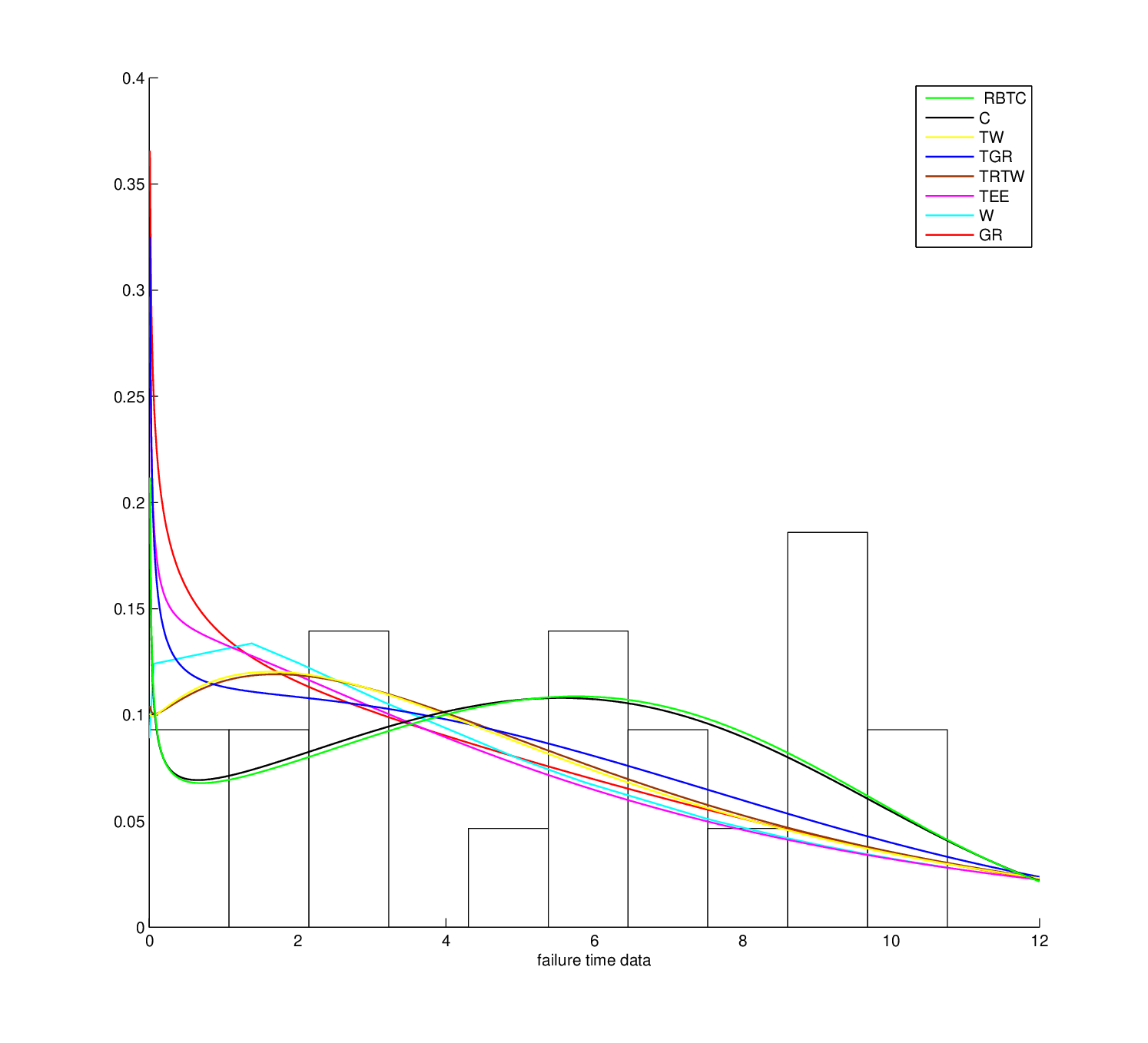}}
\caption{The fitted PDFs for the failure time data set}
\label{fig:real1pdf}
\end{figure}

From Table \ref{tab:real1}, it is clearly seen that the best-fitted distribution is the RBTC distribution for the failure time data in terms of all selection criteria.
\subsection{Iron sheet data}
In this subsection, we deal with the modelling the real-life data for RBTC distribution and its competitor ones. In this regard, we use the dataset provided by \cite{dasgupta2011distribution}, consists of 50
observations denotes to holes operation on jobs made of iron sheet. The iron sheet dataset is 0.04, 0.02, 0.06, 0.12, 0.14, 0.08, 0.22, 0.12, 0.08, 0.26, 0.24, 0.04, 0.14, 0.16, 0.08, 0.26,
0.32, 0.28, 0.14, 0.16, 0.24, 0.22, 0.12, 0.18, 0.24, 0.32, 0.16, 0.14, 0.08, 0.16, 0.24, 0.16,
0.32, 0.18, 0.24, 0.22, 0.16, 0.12, 0.24, 0.06, 0.02, 0.18, 0.22, 0.14, 0.06, 0.04, 0.14, 0.26,
0.18, 0.16. Table \ref{tab:real2} shows that the comparison statistics as weel as the MLEs and corresponding SEs of the parameters for the fitted distributions. Figure \ref{fig:real2cdf} illustrates the fitted CDFs while Figure \ref{fig:real2pdf} shows the fitted PDFs for iron sheet data.
\begin{table}[H]
\centering
\caption{The comparison statistics, MLEs and SEs of the parameters for the iron sheet data}%
\label{tab:real2}
\scalebox{0.65} {\ 
\begin{tabular}{cccccccccccccc}
\hline
Model & $-2\log \ell$ & KS     & AD     & CvM    & p-value(KS) & p-value(AD) & p-value(CvM) & $\hat{\omega}$    & $\hat{\kappa}$  & $\hat{p}$       & SE($\hat{\omega}$) & SE($\hat{\kappa}$) & SE($\hat{p}$)  \\\hline
RBTC                      & -112.9064 & 0.1039 & 0.5971 & 0.0909 & 0.6531      & 0.6499      & 0.6334       & 36.2939 & 1.8817  & 0.4872  & 15.6068 & 0.3518    & 0.3435 \\
C                         & -112.1180 & 0.1101 & 0.6792 & 0.1042 & 0.5796      & 0.5756      & 0.5663       & 33.2842 & 2.0799  & -       & 13.9247 & 0.2539    & -      \\
TW                        & -112.1179 & 0.1070 & 0.6610 & 0.0976 & 0.6163      & 0.5915      & 0.5984       & 1.9918  & 0.1708  & -0.2720 & 0.3310  & 0.0236    & 0.4307 \\
TGR                       & -112.4676 & 0.1073 & 0.5882 & 0.0957 & 0.6121      & 0.6584      & 0.6085       & 0.8452  & 5.7644  & -0.4389 & 0.2662  & 0.5097    & 0.4262 \\
GR                        & -111.5415 & 0.1263 & 0.7151 & 0.1225 & 0.4025      & 0.5456      & 0.4867       & 1.0056  & 5.5085  & -       & 0.1835  & 0.5013    & -      \\
TRTW                      & -112.5542 & 0.1057 & 0.6294 & 0.0938 & 0.6318      & 0.6198      & 0.6180       & 1.9363  & 40.4793 & 0.4758  & 0.3315  & 15.9428   & 0.3364 \\
W                         & -111.7836 & 0.1099 & 0.7084 & 0.1066 & 0.5816      & 0.5511      & 0.5553       & 2.1195  & 0.1837  & -       & 0.2463  & 0.0128    & -      \\
TEE                       & -106.5607 & 0.1462 & 1.0248 & 0.1678 & 0.2357      & 0.3442      & 0.3405       & 2.6940  & 12.4953 & -0.5471 & 0.8399  & 1.6662    & 0.3189    \\\hline 
\end{tabular}
}
\end{table}

\begin{figure}[H]
\centerline{\includegraphics[width=4.96in,height=2.82in]{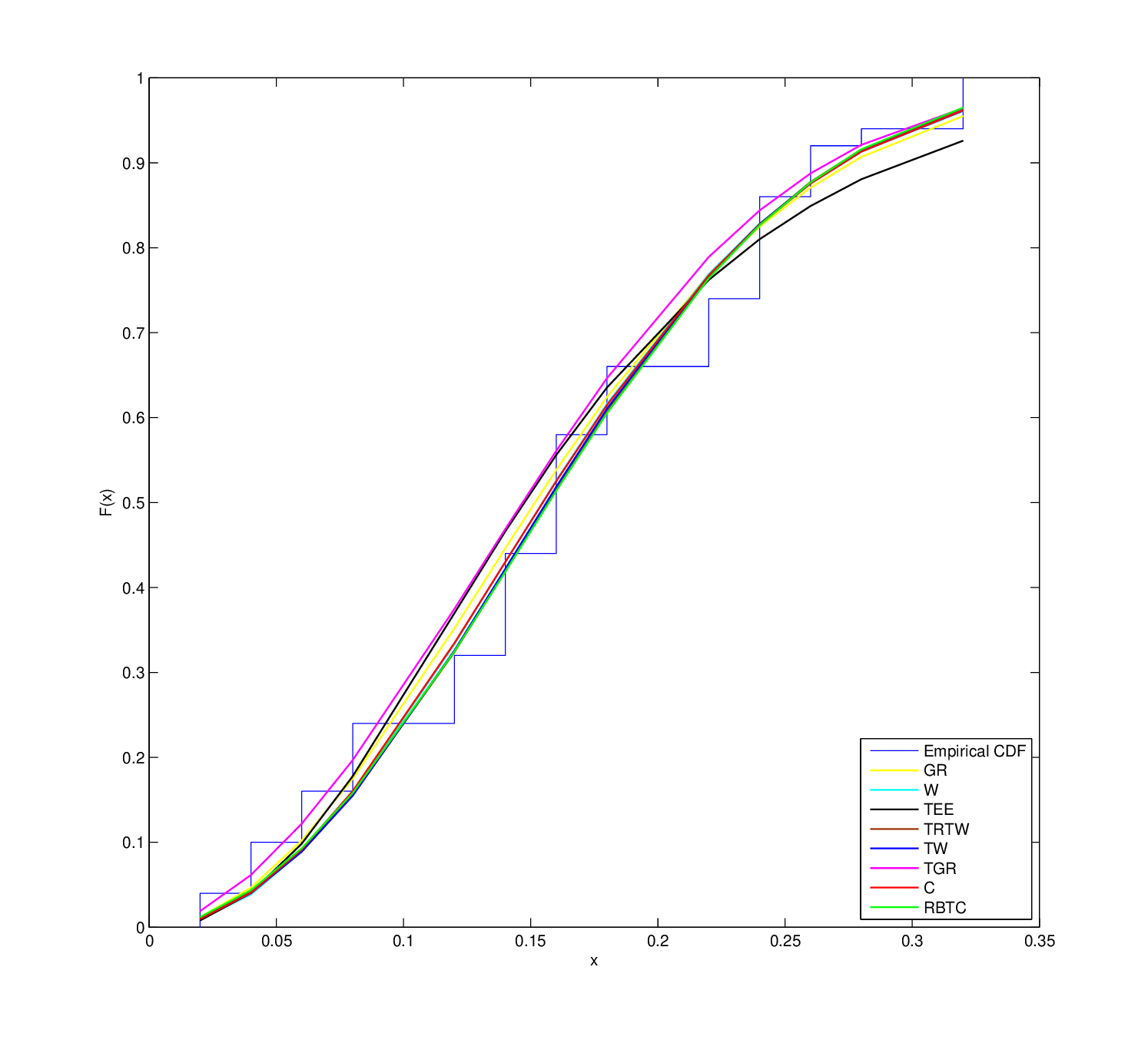}}
\caption{The fitted CDFs for the iron sheet data set}
\label{fig:real2cdf}
\end{figure}

\begin{figure}[H]
\centerline{\includegraphics[width=4.89in,height=2.78in]{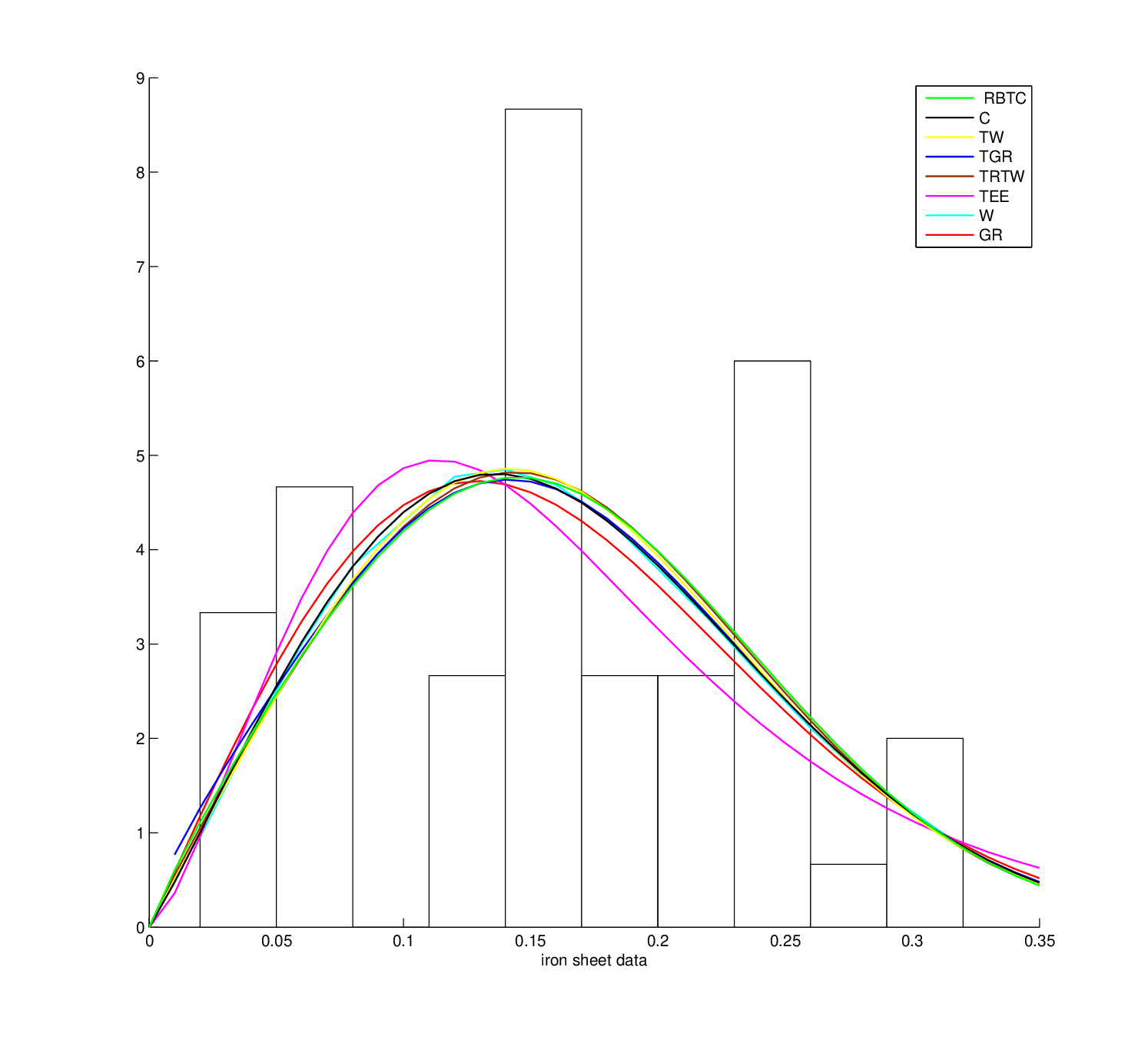}}
\caption{The fitted PDFs for the iron sheet data set}
\label{fig:real2pdf}
\end{figure}

According to Table \ref{tab:real2}, we observe that the best-fitted distribution is the RBTC distribution for the failure time data in terms of all selection criteria. Thus, we conclude that the RBTC distribution is superior to its competitors in modeling failure time and iron sheet data sets (Tables (\ref{tab:real1}-\ref{tab:real2})).

\section{Concluding Remarks}
In conclusion, we propose a new flexible alternative to Chen distribution and transmuted distributions by using the RBTM \cite{balakrishnan2021record}. The introduced distribution is called the RBTC distribution. The RBTC distribution is flexible in data modeling due to having to different hazard shapes such as increasing, decreasing and bathtub curve. We consider nine estimators for the point estimation. We use an AR algorithm to generate random sample from the RBTC distribution. Then, we utilize this algorithm in all simulations. According to findings of the MC simulation study, we recommend MSADE, MSALDE and MLE to estimate the $\omega$, $\kappa$ and p parameters. In practical data analysis, we observe that the RBTC distribution is superior to its competitors according to all selection criteria. In future work, we will suggest a novel special case of the record-based transmuted family of distribution based on log-logistic distribution. 

\bigskip
\noindent\textbf{Funding} We have not any financial support in this paper.
\newline
\bigskip
\noindent\textbf{Data Availability} The datasets are open access are given in reference list.

\bigskip
\noindent\textbf{Conflict of interest} There is no conflict of interest.

\bigskip
\noindent\textbf{Author Contribution declaration}
C.T. contributed to all sections of the article as the lone author.

\bibliographystyle{apalike} 
\bibliography{ref}
\end{document}